\def\ps@pprintTitle{%
	\let\@oddhead\@empty
	\let\@evenhead\@empty
	\def\@oddfoot{}%
	\let\@evenfoot\@oddfoot}
\numberwithin{equation}{section}
\newtheorem{theorem}{Theorem}[section]
\newtheorem{lemma}{Lemma}[section]
\newtheorem{proposition}{\bf Proposition}[section]
\journal{Elsevier}
\begin{document}
	\title{Effective lockdown and role of hospital-based COVID-19 transmission in some Indian states: An outbreak risk analysis}

	\author[DAC]{Tridip Sardar \footnote{Email: tridipiitk@gmail.com}}
	\author[VB]{Sourav Rana \footnote{Corresponding author. Email: sourav.rana@visva-bharati.ac.in} }
	\address[DAC]{Department of Mathematics, Dinabandhu Andrews College, Kolkata, West Bengal, India}
	\address[VB]{Department of Statistics, Visva-Bharati University, Santiniketan, West Bengal, India}

\begin{abstract}

There are several reports in India that indicate hospitals and quarantined centers are COVID-19 hotspots. In the absence of efficient contact tracing tools, Govt. and the policymakers may not be paying attention to the risk of hospital-based transmission. To explore more on this important route and its possible impact on lockdown effect, we developed a mechanistic model with hospital-based transmission. Using daily notified COVID-19 cases from six states (Maharashtra, Delhi, Madhya Pradesh, Rajasthan, Gujarat, and Uttar Pradesh) and overall India, we estimated several important parameters of the model. Moreover, we provided an estimation of the basic ($R_{0}$), the community ($R_{C}$), and the hospital ($R_{H}$) reproduction numbers for those seven locations. To obtain a reliable forecast of future COVID-19 cases, a BMA post-processing technique is used to ensemble the mechanistic model with a hybrid statistical model. Using the ensemble model, we forecast COVID-19 notified cases (daily and cumulative) from May 3, 2020, till May 20, 2020, under five different lockdown scenarios in the mentioned locations. Our analysis of the mechanistic model suggests that most of the new COVID-19 cases are currently undetected in the mentioned seven locations. Furthermore, a global sensitivity analysis of four epidemiologically measurable \& controllable parameters on $R_{0}$ and as well on the lockdown effect, indicate that if appropriate preventive measures are not taken immediately, a much larger COVID-19 outbreak may trigger from hospitals and quarantined centers. In most of the locations, our ensemble model forecast indicates a substantial percentage of increase in the COVID-19 notified cases in the coming weeks in India. Based on our results, we proposed a containment policy that may reduce the threat of a larger COVID-19 outbreak in the coming days.

\end{abstract}

\begin{keyword}
	COVID-19; Hospital-based transmission; Ensemble model forecast; Outbreak risk analysis; Effective lockdown policy; 
\end{keyword}

\maketitle

\section*{Introduction}
Coronavirus disease of 2019 (COVID-19) was first observed in Wuhan, China and rapidly spread across the globe in a short duration~\cite{Chen2020Lancet}. World Health Organization (WHO) declares COVID-19 as pandemic after assessing its various characteristics~\cite{WHO_Pandemic}. As of April 29, 2020--over three million cases and over two hundred thousand deaths due to COVID-19 is reported across the globe~\cite{worldcovid2020track}. In India, the first confirmed case of COVID-19 was reported on $30/01/2020$~\cite{WHO2020_1}, a student from Kerala studying in a university in Wuhan \cite{India_Today5}. As of April 29, 2020--$33065$ confirmed cases and $1079$ deaths due to COVID-19 are reported in India~\cite{indiacovid2020track}.    

According to a daily monitoring report published by WHO, $22,073$~healthcare workers across $52$ countries are being tested positive to COVID-19~\cite{WHOreportheathcare2020}. The report also noted that the number provided may be an underestimation as there is no systematic reporting of infections among the healthcare workers~\cite{WHOreportheathcare2020}. In India, there are several reports that indicate hospitals and quarantined centers are COVID-19 hotspots~\cite{TheHINDUsuperspreaders, India_Today1, Economic_Times1, Print_In1, NDTV2}. The doctors, nurses and other health workers are mainly vulnerable as they are in close proximity with the COVID-19 patients~\cite{India_Today1, Economic_Times1, Print_In1, NDTV2, NDTVheathcareworkers}. The close relatives of the notified COVID-19 patients in the quarantine centers may also be at risk of getting infection. In addition, those journalists who are continuously visiting  hospitals and quarantine centers to get updated reports on COVID-19, may also be at risk of getting infection \cite{IndiaTVjournalistscovid19, Scroll_In1}. Therefore, a significant percentage of susceptible population in the community may be exposed to COVID-19 infection occurring from the contacts with patents in hospitals and quarantine centers. Due to unavailability of efficient contact tracing tools, Govt. and the policy makers may be ignoring this important transmission route of COVID-19.  

Currently there is no vaccine and effective medicine available for COVID-19. Therefore, to break the transmission chain of COVID-19, Govt. had implemented a full nation-wise lockdown (home quarantined the community) staring from March 25, 2020 till April, 14, 2020. However, the large country like India with such diverse and huge population, lockdown all over the nation may not be a very feasible and effective solution. In addition, lockdown already have a huge impact on the Indian economy specially on the short scale industries~\cite{Economic_Times2, India_Today4}. To partially overcome this economic crisis as well reducing COVID-19 transmission, Govt. has proposed some amendments (known as cluster containment strategy) on the lockdown rules from April 20, 2020~\cite{Economictimes2020a, financialexpress20a}. In these revised rules, Govt. has provided some relaxation in current rules by dividing different districts of the various states into three  zones namely red (hotspot), orange (limited human movement), and green (economic activity) depending on the number of COVID-19 cases~\cite{Economictimes2020a, financialexpress20a, businesstoday2020a}. However, question remains whether this cluster containment strategy may be successful in reducing COVID-19 transmission or not? If not then what may be other alternative solutions to reduce COVID-19 transmission? These question can only be answered by studying the dynamics and prediction of a mechanistic mathematical model for COVID-19 transmission and testing the results in real situation. 

Mechanistic mathematical models based on system of ordinary differential equations~(ODE) may provide useful information regarding transmission dynamics of COVID-19 and its control. Recently, there are several modeling studies that provide information on different effective control measures for reducing COVID-19 transmission~\cite{Moghadas2020PNAS, Tang2020, sardar2020assessment}. In their study, Moghdas et al~\cite{Moghadas2020PNAS} developed an age-structure model on COVID-19 and projected required ICU beds for different outbreak scenarios in USA. To improve the hospital capacity and possible containment of COVID-19, they recommended self isolation and better hygiene practices in the community. Using a mechanistic model, Tang et al~\cite{Tang2020} estimated the control reproduction numbers and studied the effect of various interventions on COVID-19 transmission in China. Recently, Sardar et al~\cite{sardar2020assessment} provided an effective lockdown strategy to control COVID-19 transmission in India. They recommended cluster specific lockdown policy in different parts of India for effective reduction in transmission of COVID-19. There are some recent studies that uses different statistical modeling techniques to provide reliable real-time forecast of COVID-19 cases~\cite{Chakraborty2020Arxiv,Tandon2020Arxiv,Hu2020Arxiv}. Therefore, a combination of mathematical and statistical models may be effective in determining  the control measures as well providing a robust real-time forecast of future COVID-19 cases and deaths in different parts of India and for other countries as well.

In this paper, we formulated a mechanistic model with hospital-based transmission. We assume that COVID-19 patients from the hospitals and quarantine centers can only be in contact with a small fraction of the susceptible population from the community. Furthermore, we assume different transmission rates for the community and the hospital-based infection. In the mechanistic model, we have incorporated  the lockdown effect through home quarantine of a certain percentage of susceptible population from the community. Using daily notified COVID-19 cases from six states (Maharashtra, Delhi, Madhya Pradesh, Rajasthan, Gujarat, and Uttar Pradesh) and overall India, we estimated several important parameters of the mechanistic model. Furthermore, we estimated the basic ($R_{0}$), the community ($R_{C}$), and the hospital ($R_{H}$) reproduction numbers for the seven locations under study. To obtain a reliable forecast of future COVID-19 notified cases in the above mentioned locations, we used a hybrid statistical model that can efficiently capture fluctuations in the daily time series data. BMA post-processing technique based on DRAM algorithm is used to ensemble our mechanistic mathematical model with the hybrid statistical model. Using the ensemble model, we forecast COVID-19 notified cases (daily and cumulative) from May 3, 2020 to May 20, 2020, under five different lockdown scenarios in the seven locations. To determine an effective lockdown policy, we carried out a global sensitivity analysis of four epidemiologically measurable \& controllable parameters on the lockdown effect (number of cases reduction) and as well on $R_{0}$.

\section*{Method}

We extend a previous mechanistic model~\cite{sardar2020assessment} by considering hospital-based COVID-19 transmission (see Fig~\ref{Fig:Flow_India_covid} and supplementary). We assume that hospitalized \& notified infected population can only be in contact with a small fraction ($\rho$) of the susceptible population from the community~(see Table~\ref{tab:mod1} and supplementary). We assume different transmission rates ($\beta_{1}$ and $\beta_{2}$, respectively) for community and hospital-based infection. As it is very difficult to detect asymptomatic infected in the community therefore, we assume that only a fraction of symptomatic infected population being notified \& hospitalized by COVID-19 testing at a rate, $\tau$ (see Fig~\ref{Fig:Flow_India_covid} and Table~\ref{tab:mod1}). Following~\cite{sardar2020assessment}, the disease related deaths are considered only for the notified \& hospitalized population at a rate $\delta$. We incorporated lockdown effect in our model (see Fig~\ref{Fig:Flow_India_covid} and supplementary) by home quarantined a fraction of susceptible population at a rate $l$. We also assume that after the current lockdown period $\left(\displaystyle \frac{1}{\omega} = 40 \hspace{0.1 cm} days \right)$ the home quarantined individuals will returned to the general susceptible population (see Fig~\ref{Fig:Flow_India_covid} and supplementary). Moreover, we assume that the home quarantined individuals do not mixed with the general population (see Fig~\ref{Fig:Flow_India_covid}) \textit{i.e.} this class of individuals do not contribute in the disease transmission. A flow diagram and the information on our mechanistic ODE model parameters are provided in Fig~\ref{Fig:Flow_India_covid} and Table~\ref{tab:mod1}, respectively.                      

Mechanistic ODE model (see Fig~\ref{Fig:Flow_India_covid} and supplementary) we are using for this study may be efficient in capturing overall trend of the time-series data and the transmission dynamics of COVID-19. However, as solution of the ODE model is always smooth therefore, our mechanistic model may not be able to capture fluctuations occurring in the daily time-series data. Several complex factors like lockdown, symptomatic, asymptomatic, hospital transmission, awareness, rapid testing, preventive measures, etc may influence the variations in daily COVID-19 time-series data. Therefore, it is an extremely challenging job to fit and long term forecast using this daily time-series data. To resolve this issue, we considered a Hybrid statistical model which is a combination of five forecasting models namely, Auto-regressive Integrated Moving Average model (ARIMA); Exponential smoothing state space model (ETS); Theta Method Model (THETAM); Exponential smoothing state space model with Box-Cox transformation, ARMA errors, Trend and Seasonal components (TBATS); and Neural Network Time Series Forecasts (NNETAR). Finally, the Hybrid statistical model and the mechanistic ODE model (see Fig~\ref{Fig:Flow_India_covid} and supplementary) are combined together by a post-processing Bayesian Model Averaging (BMA) technique which we discussed later in the manuscript.

We used daily confirmed COVID-19 cases from Maharashtra (MH), Delhi (DL), Madhya Pradesh (MP), Rajasthan (RJ), Gujarat (GJ), Uttar Pradesh (UP) and overall India (IND) for the time period March 14, 2020 to April 14, 2020 (MH), March 14, 2020 to April 18, 2020 (DL), March 20, 2020 to April 17, 2020 (MP), March 14, 2020 to April 18, 2020 (RJ), March 19, 2020 to April 16, 2020 (GJ),  March 14, 2020 to April 18, 2020 (UP), and March 2, 2020 to April 19, 2020 (IND) for our study. As of April 29, 2020, these referred six states contributes \textbf{74\%} of the total COVID-19 notified cases in India \cite{indiacovid2020track}. Confirmed daily COVID-19 cases form these mentioned seven locations are collected from \cite{indiacovid2020track}. State-wise population data are taken from \cite{aadhaar20}.  

We estimated several uninformative parameters (see Table~\ref{tab:mod1}) of our mechanistic model (see Fig~\ref{Fig:Flow_India_covid} and supplementary) by calibrating our mathematical model (see Fig~\ref{Fig:Flow_India_covid} and supplementary) to the daily notified COVID-19 cases from the seven locations MH, DL, MP, RJ, GJ, UP, and IND respectively. As some initial conditions of our mathematical model (see Fig~\ref{Fig:Flow_India_covid} and supplementary) are also unknown therefore, we prefer to estimate these uninformative initial conditions from the data (see Table S1 in supplementary). In lockdown 1.0, the Indian Govt. has implemented a 21 days nationwide full lockdown (home quarantined the community) starting from March 25, 2020 to April 14, 2020 \cite{Economictimes2020} and Govt. then extend the lockdown period up to May 3, 2020 (lockdown 2.0) \cite{timesofindia20b, financialexpress20}. Therefore, the daily COVID-19 time series data contains the effect of with and without lockdown scenario, therefore, we prefer to use a combination of two mathematical models (without and with lockdown) for calibration. An elaboration on the combination technique of the mathematical models without and with lockdown (see Fig~\ref{Fig:Flow_India_covid} and supplementary) is provided below:

\begin{description}
	\item[$\bullet$] We first use the mechanistic model without lockdown (see Eq~S1 in supplementary) starting from the first date of the daily COVID-19 data up to end of March 24, 2020 for the seven locations MH, DL, MP, RJ, GJ, UP, and IND, respectively. 	
	\item[$\bullet$] Using values of the state variables of the model without lockdown (see Model~S1 in supplementary method) onset of March 24, 2020 as initial conditions, we run the mechanistic model with lockdown (see Fig~\ref{Fig:Flow_India_covid} and Model~S2 in supplementary) up to the end date of the daily COVID-19 data for the seven locations MH, DL, MP, RJ, GJ, UP, and IND, respectively.	
\end{description} The nonlinear least square function $'lsqnonlin'$ in the \textit{MATLAB} based optimization toolbox is called to fit the simulated and observed daily COVID-19 notified cases in those seven locations mentioned earlier. Bayesian based $'DRAM'$ algorithm \cite{haario2006dram} is used to sample the uninformative parameters and initial conditions (see Table~S1 and Table~S2 in supplementary) of the mathematical models combination without and with lockdown (see Fig~\ref{Fig:Flow_India_covid} and supplementary). A details on mechanistic model fitting is provided in \cite{sardar2017mathematical}. 

Calibration of the Hybrid statistical model for the mentioned seven states are done using the $R$ package $'forecastHybrid'$. First, we fitted the individual models ARIMA, ETS, THETAM, TBATS, and NNETAR by calling the functions $'auto.arima'$, $'ets'$, $'thetam'$, $'tbats'$, and $'nnetar'$ respectively. The results generated from each of the above models are combined with equal weights to determine the Hybrid statistical model. Equal weight among the five individual models are taken as it generates a robust result (see Table S4 in supplementary) for the Hybrid statistical model \cite{lemke2010meta}.   

Post-processing BMA technique for combining the mechanistic model (see Fig~\ref{Fig:Flow_India_covid} and supplementary) and the Hybrid statistical model is based on $'DRAM'$ algorithm \cite{haario2006dram}. Let, $\displaystyle Y^{ODE} = \{y_{j}^{ODE}\}_{j=1}^n$ and $\displaystyle Y^{HBD} = \{y_{j}^{HBD}\}_{j=1}^n$ be $n$ simulated observations from our mechanistic ODE model (see Fig~\ref{Fig:Flow_India_covid} and supplementary) and the Hybrid statistical model respectively and let, $\displaystyle \hat{Y} = \{y_{j}^{obs}\}_{j=1}^n$ be $n$ observation from the data. Then

\begin{equation}
\displaystyle Y^{E} = w_{1} \hspace{0.2cm} Y^{ODE} + w_{2} \hspace{0.2cm} Y^{HBD},   
\label{Eq:ensemble}
\end{equation} is our ensemble model, where, the weights $w_{1}$ and $w_{2}$ satisfies the constraints
\begin{equation*}
\displaystyle \Delta = \left\lbrace w_{1}, w_{2} \geq 0: w_{1} + w_{2} =1 \right\rbrace.
\end{equation*}We assume $w_{1}$ and $w_{2}$ follows Gaussian proposal distribution. Then the error sum of square function \cite{haario2006dram} is defined as:  

\begin{equation*}
\displaystyle SS(\tilde{\theta}) = \sum_{i=1}^{n} \left( \hat{Y} - Y^{E}(\tilde{\theta})\right)^{2}.
\end{equation*} Posterior distribution of the weights $\tilde{\theta} = \left( \tilde{w_{1}}, \tilde{w_{2}} \right)$ for the ensemble model~(\ref{Eq:ensemble}) are generated using Bayesian based $'DRAM'$ algorithm \cite{haario2006dram} (see Table~S3 and Fig~S1 to S7 in supplementary).               
   
To save the countries short-scale industries and the agricultural sectors, Indian Govt. has proposed some amendments on current lockdown rules from April 20, 2020~\cite{Economictimes2020a, financialexpress20a}. In these revised rules, Govt. has provided some relaxation in current rules by dividing different districts of the various states into three red (hotspot), orange (limited human movement), and green (Economic activity) zones depending on the number of COVID-19 cases~\cite{Economictimes2020a, financialexpress20a, businesstoday2020a}. Implementation of these new rules in our mechanistic models combination~(see Fig~\ref{Fig:Flow_India_covid} and supplementary) are based on the following assumptions:

\begin{description}
	\item[$\bullet$] Lockdown rule will be relaxed from April 20, 2020 in those states where the current estimate of the lockdown rate (see Table~\ref{Tab:estimated-parameters-Table}) is higher than a threshold value. This relaxation in lockdown is based on the fact that locations where lockdown are strictly implemented before April 20, 2020 are likely to have more impact on the economic growth.   
	
	\item[$\bullet$] Lockdown rule will be more intensive from April 20, 2020 in those states where the current estimate of the lockdown rate (see Table~\ref{Tab:estimated-parameters-Table}) is below a threshold value. This assumption is made because locations where lockdown are not implemented properly before April 20, 2020 are likely to have more red (hotspot) zones. 
	
	\item[$\bullet$] 50\% lockdown success is taken as the threshold value for our study. Here, 50\% lockdown success in Delhi means that 50\% of the susceptible population in this state is successfully home-quarantined during the period March 25, 2020 till April 20, 2020. 
	
	\item[$\bullet$]  Insensitivity and relaxation in lockdown are measured in a same scale namely 10\%, 20\% and 30\% increment or decrement on the current estimate of lockdown rate (see Table~\ref{Tab:estimated-parameters-Table}).  	
\end{description}  
   
Following the above assumptions and using our ensemble model~(\ref{Eq:ensemble}), we provided a forecast of notified COVID-19 cases (daily and cumulative) for MH, DL, MP, RJ, GJ, UP, and IND, respectively during May 3, 2020 till May 20, 2020. As COVID-19 notified cases are continuously rising in these seven locations therefore, it is more likely that lockdown period will be extended beyond May 3, 2020. Therefore, forecast using the ensemble model~(\ref{Eq:ensemble}) during the mentioned time duration in those seven locations are based on the following scenarios:\\
(\textbf{A1}) Using our mechanistic models combination~(see Fig~\ref{Fig:Flow_India_covid} and supplementary) and the current estimate of the lockdown rates (see Table~\ref{Tab:estimated-parameters-Table}), we forecast notified COVID-19 cases up to May 20, 2020. This forecast is combined together with the results obtain from the hybrid statistical model by using our ensemble model~(\ref{Eq:ensemble}). \\  

(\textbf{A2}) Using our mechanistic models combination~(see Fig~\ref{Fig:Flow_India_covid} and supplementary) and the current estimate of the lockdown rates (see Table~\ref{Tab:estimated-parameters-Table}), we forecast notified COVID-19 cases up to April 20, 2020. From April 21, 2020 till May 20, 2020, forecast are made using \textbf{10\%} increment or decrement (depending on the state) in the estimate of current lockdown rate. This forecast is combined together with the results obtain from the hybrid statistical model by using our ensemble model~(\ref{Eq:ensemble}).\\ 

(\textbf{A3}) Using our mechanistic models combination~(see Fig~\ref{Fig:Flow_India_covid} and supplementary) and the current estimate of the lockdown rates (see Table~\ref{Tab:estimated-parameters-Table}), we forecast notified COVID-19 cases up to April 20, 2020. From April 21, 2020 till May 20, 2020, forecast are made using \textbf{20\%} increment or decrement (depending on the state) in the estimate of current lockdown rate. This forecast is combined together with the results obtain from the hybrid statistical model by using our ensemble model~(\ref{Eq:ensemble}).\\ 

(\textbf{A4}) Using our mechanistic models combination~(see Fig~\ref{Fig:Flow_India_covid} and supplementary) and the current estimate of the lockdown rates (see Table~\ref{Tab:estimated-parameters-Table}), we forecast notified COVID-19 cases up to April 20, 2020. From April 21, 2020 till May 20, 2020, forecast are made using \textbf{30\%} increment or decrement (depending on the state) in the estimate of current lockdown rate. This forecast is combined together with the results obtain from the hybrid statistical model by using our ensemble model~(\ref{Eq:ensemble}).\\

(\textbf{A5}) Using our mechanistic models combination~(see Fig~\ref{Fig:Flow_India_covid} and supplementary) and the current estimate of the lockdown rates (see Table~\ref{Tab:estimated-parameters-Table}), we forecast notified COVID-19 cases up to May 2, 2020. From May 3, 2020 till May 20, 2020, forecast are made with no lockdown (see Eq.~S1 in supplementary). This forecast is combined together with the results obtain from the hybrid statistical model by using our ensemble model~(\ref{Eq:ensemble}).\\

As, we assumed that lockdown individuals do not mixed with the general population therefore, the basic reproduction number ($R_{0}$) with and without lockdown (see Fig~\ref{Fig:Flow_India_covid} and supplementary) are equal~\cite{van2002reproduction}:

\begin{equation}
\displaystyle R_{0}=\frac{\beta_1 \kappa \sigma}{(\mu + \sigma)(\gamma_1 + \mu + \tau)}+\frac{\beta_1 (1-\kappa) \sigma}{(\mu+\gamma_2)(\mu + \sigma)} +\frac{\beta_2 \kappa \rho \sigma \tau}{(\mu + \sigma)(\delta + \gamma_3 + \mu)(\gamma_1 + \mu + \tau)}.\\\nonumber
\end{equation}     

In the expression of $R_{0}$, sum of first two term indicate the community infection occurring from symptomatic and asymptomatic infected population. The third term in $R_{0}$ specify the hospital-based COVID-19 transmission. To distinguish the community and hospital-based COVID-19 transmission, we defined the community reproduction number ($R_{C}$), and the hospital reproduction number ($R_{H}$) as follows:
      
\begin{equation}
\displaystyle R_{C}=\frac{\beta_1 \kappa \sigma}{(\mu + \sigma)(\gamma_1 + \mu + \tau)}+\frac{\beta_1 (1-\kappa) \sigma}{(\mu+\gamma_2)(\mu + \sigma)},\\\nonumber
\end{equation}and  
\begin{equation}
\displaystyle R_{H}=\frac{\beta_2 \hspace{0.04cm} \kappa \hspace{0.04cm} \rho \hspace{0.04cm}\sigma\hspace{0.04cm} \tau}{(\mu + \sigma)(\delta + \gamma_3 + \mu)(\gamma_1 + \mu + \tau)}. \\\nonumber
\end{equation} Using estimated values of epidemiologically uninformative parameters (see Table~\ref{Tab:estimated-parameters-Table}), we estimated $R_{0}$, $R_{C}$, and $R_{H}$ for the seven locations MH, DL, MP, RJ, GJ, UP, and IND, respectively.  

Constructing an effective policy on future lockdown in a region will require some relation between effect of lockdown (number of COVID-19 case reduction) with some important epidemiologically measurable \& controllable parameters. Our mechanistic ODE model~(see Fig~\ref{Fig:Flow_India_covid} and supplementary) has several important parameters and among them measurable and controllable parameters are $\beta_{2}$: average rate of transmission occurring from hospitalized \& notified based contacts (it can be controllable by following WHO guidelines); $\rho$: fraction of susceptible population from the community that are exposed to notified \& hospitalized based contacts (it also can be minimized by following proper guidelines from WHO); $\kappa$: fraction of infected that are symptomatic (rapid COVID-19 testing can provide an accurate estimate); $\tau$: notification \& hospitalization rate of symptomatic infected population (it also depend on number of COVID-19 testing). We perform a global sensitivity analysis~\cite{marino2008methodology} to determine the effect of these parameters on the lockdown effect and on the basic reproduction number ($R_{0}$), respectively. The effect of lockdown is measured in terms of differences in the total number of COVID-19 cases occurred during May 3, 2020 till May 20, 2020 under the lockdown scenarios (\textbf{A1}) and (\textbf{A5}), respectively. We draw $500$ samples from the biologically feasible ranges of the mentioned four parameters (see Table~\ref{tab:mod1}) using Latin Hypercube Sampling (LHS) technique. Other informative and uninformative parameters during simulation of the mechanistic model are taken from Table~\ref{tab:mod1} and Table~\ref{Tab:estimated-parameters-Table}, respectively. Partial rank correlation coefficients (PRCC) and its corresponding $p$-value are evaluated to determine the effect of these mentioned four parameters on the lockdown effect and the basic reproduction number ($R_{0}$), respectively.        
\section*{Results and Discussion} 

Testing of the three models (the mechanistic ODE model, the hybrid statistical model, and the ensemble model) on daily notified COVID-19 cases from Maharashtra (MH), Delhi (DL), Madhya Pradesh (MP), Rajasthan (RJ), Gujarat (GJ), Uttar Pradesh (UP), and India (IND) are presented in Fig.~\ref{Fig:Model-fitting}. Based on the performance on testing data from the mentioned seven locations, we estimated the weights ($w_{1}$ and $w_{2}$) for our ensemble model~(\ref{Eq:ensemble}) (see Table S2 in supplementary). Our result suggest that the mechanistic ODE model (see Fig~\ref{Fig:Flow_India_covid} and supplementary) displayed a better performance in RJ, UP, and IND compared to the hybrid statistical model (see Table S2 in supplementary). For rest of the locations (MH, DL, MP, and GJ), the hybrid statistical model has performed better than the mechanistic ODE model in terms of capturing the trend of the time-series data~(Table~S2 and Table~S3 in supplementary). The ensemble model~(\ref{Eq:ensemble}), which is derived from a combination of the mechanistic ODE~(see Fig~\ref{Fig:Flow_India_covid} and supplementary) and the hybrid statistical model respectively, has provided a robust result in all of these mentioned seven locations in terms of capturing time-series data trend (see Fig.~\ref{Fig:Model-fitting}).  

The estimates of uninformative parameters of the mechanistic ODE model (Fig~\ref{Fig:Flow_India_covid} and supplementary) suggests that currently in the seven locations (MH, DL, MP, RJ, GJ, UP, and IND) the community infection is mainly dominated by contribution from the asymptomatic infected population (Table~\ref{Tab:estimated-parameters-Table}). Among the seven locations, the lowest percentage of symptomatic infection in the community is found in~RJ (about \textbf{0.1\%}) and the highest percentage is found in IND (about \textbf{35\%}) (Table~\ref{Tab:estimated-parameters-Table}). Our estimates suggest that currently in the seven locations (MH, DL, MP, RJ, GJ, UP, and IND), the notification \& hospitalization rate of symptomatic infected population is about \textbf{0.2\%} to \textbf{23 \%} (Table~\ref{Tab:estimated-parameters-Table}). Therefore, most of the COVID-19 infections in those mentioned seven locations are currently undetected. Our result agrees with the recent report by the Indian Council of Medical Research (ICMR)~\cite{Indiatodayasymptomatic}. Our estimates of the lockdown rate for MH, DL, MP, RJ, GJ, UP, and IND, respectively, suggest that lockdown is properly implemented in the two metro cities DL and MH. Also, in overall India (IND) lockdown is properly implemented. In these three locations (MH, DL, and IND), about \textbf{61\%} to \textbf{77\%} of the total susceptible population may be successfully home quarantined during the present lockdown period (Table~\ref{Tab:estimated-parameters-Table}). However, for rest of the locations, our results suggest that lockdown may not be successful as about \textbf{11\%} to \textbf{49\%} of the total susceptible population may be isolated (home quarantined) during the current lockdown period in MP, RJ, GJ, and UP, respectively (Table~\ref{Tab:estimated-parameters-Table}). In the seven locations, we found that about \textbf{1\%} to \textbf{9\%} of the total susceptible populations may be exposed to hospital (notified \& hospitalized population) related contacts (Table~\ref{Tab:estimated-parameters-Table}). Considering the fact that estimates of the average hospital-based transmission rates for the seven locations (MH, DL, MP, RJ, GJ, UP, and IND) are very high~(Table~\ref{Tab:estimated-parameters-Table}), therefore, may be a significant amount of COVID-19 infection in these seven locations are currently occurring due to notified \& hospitalized infected related contacts. These findings can be further justified by analyzing the estimates of the basic ($R_{0}$), the community ($R_{C}$) and the hospitalized ($R_{H}$) reproduction numbers for~MH, DL, MP, RJ, GJ, UP, and IND, respectively (see Table~\ref{Tab:estimated-R0-Table}). Except for the \textbf{RJ}, in the remaining six locations, we found that about \textbf{1\%} to \textbf{16\%} of the total COVID-19 transmission currently occurring from notified \& hospitalized infected related contacts (see Table~\ref{Tab:estimated-R0-Table}). These figures can be increased up to \textbf{43\%} to \textbf{69\%} if proper measures are not taken in MH, DL, MP, GJ, UP, and IND, respectively, (see Table~\ref{Tab:estimated-R0-Table}). This is a worrisome situation as higher value of the hospital-based transmission rate in MH, DL, MP, GJ, UP, and IND, respectively (Table~\ref{Tab:estimated-parameters-Table}), indicate that there may be super-spreading incidents occurring from hospital-based contacts. In~RJ, low contribution of $R_{H}$ on $R_{0}$ (see Table~\ref{Tab:estimated-R0-Table}) may be due to existence of low percentage of the symptomatic infected population in the community (Table~\ref{Tab:estimated-parameters-Table}) and that leads to low percentage of notified \& hospitalized COVID-19 cases. For further investigation on super-spreading events, we carried a global uncertainty and sensitivity analysis of some epidemiologically measurable and controllable parameters from our mechanistic ODE model (Fig~\ref{Fig:Flow_India_covid} and supplementary) namely, $\boldsymbol{\beta_{2}}$: average rate of transmission occurring from notified \& hospitalized based contact (it can be controllable following the WHO guidelines~\cite{WHOhospitaltransmission}), $\boldsymbol{\rho}$: fraction of susceptible population that are exposed to hospital-based contact (it can be reduced by following proper guidelines from the WHO~\cite{WHOhospitaltransmission}), $\boldsymbol{\kappa}$: fraction of the newly infected that are symptomatic (Rapid COVID-19 testing can provide an accurate estimate), $\boldsymbol{\tau}$: hospitalization \& notification rate of symptomatic infected population (it also depend on number of COVID-19 testing) on the basic reproduction number ($R_{0}$). Partial rank correlation coefficients (PRCC) and its corresponding $p$-value suggested that all these four parameters have significant positive correlation with $R_{0}$ (see Fig.~\ref{Fig:Sensitivity-IND} and  Fig.~S7 to Fig.S12 in supplementary). Furthermore, high positive correlation of $\rho$ on $R_{0}$ indicate that small increase in the percentage of susceptible population from the community that are exposed hospital-based transmission will leads to significant increase in COVID-19 force of infection. Considering the fact that estimated value of $\beta_{2}$ (see Table~\ref{Tab:estimated-parameters-Table})~in MH, DL, MP, RJ, GJ, UP, and IND, respectively are very high (much higher than community transmission rate), therefore, a small increase in $\rho$ may leads to a larger COVID-19 outbreak in those seven locations. Therefore, until and otherwise any preventive measures are taken in these locations, a larger COVID-19 outbreak may trigger from hospitals and quarantine centers.

Using the ensemble model~(\ref{Eq:ensemble}), we forecast daily as well as total COVID-19 cases under five different lockdown scenarios in MH, DL, MP, RJ, GJ, UP, and IND, respectively, from May 3, 2020 to~May 20, 2020,~(see Fig.~\ref{Fig:Forecast-IND}, Table~\ref{Tab:cases-preiction-Table} and Fig.~S1 to~Fig.~S6 in supplementary). Comparing the projected total COVID-19 cases during~May 3, 2020 to~May 20, 2020, (see Table~\ref{Tab:cases-preiction-Table}) with the total observed cases~\cite{indiacovid2020track} during March 2, 2020 till April 29, 2020, we found about two fold increase in the total cases in~MH, MP, GJ, UP, and IND, respectively. In summary, our forecast result suggest that in the coming two weeks a significant increase in cases may be observed in most of these locations.

To determine which epidemiologically measurable and controllable parameters are most influencing the effect of lockdown, we carried out a global uncertainty analysis of $\beta_{2}$, $\rho$, $\kappa$, and $\tau$ on the lockdown effect. The lockdown effect is measured in terms of differences in the total number of COVID-19 cases during May 3, 2020 till May 20, 2020, in MH, DL, MP, RJ, GJ, UP, and IND, respectively, under the lockdown scenarios (\textbf{A1}) and (\textbf{A5}), respectively (see method section for details). For MH, PRCC and its corresponding $p$-value suggested that all these four parameters have significant influence on the lockdown effect (see Fig.~S7 in supplementary). Furthermore, significant negative correlation of $\beta_{2}$, and $\rho$ with the lockdown effect~(see Fig.~S7 in supplementary) suggested that only home quarantined the community may not be sufficient to reduce COVID-19 transmission in MH. Govt. and the policy makers may also have to focus on reducing the transmission occurring from hospital premises based on the guidelines from the WHO \cite{WHOhospitaltransmission}. For DL, PRCC and its corresponding $p$-value suggested that $\beta_{2}$, $\rho$, and $\kappa$ are the main parameters that are influencing the lockdown effect (see Fig.~S8 in supplementary). Moreover, significant negative correlation of $\beta_{2}$ and $\rho$ with the lockdown effect and as well as significant positive correlation of $\kappa$ with the lockdown effect~(see Fig.~S8 in supplementary) implies that an effective lockdown policy in DL may be a combination of lockdown in the community, contact tracing of COVID-19 cases, and with some effort in reducing hospital-based transmission following WHO guidelines~\cite{WHOhospitaltransmission}. For MP, PRCC and its corresponding $p$-value suggested that $\kappa$ and $\tau$ have high positive correlation with the lockdown effect~(see Fig.~S9 in supplementary). Furthermore, $\rho$ have significant negative correlation with the lockdown effect~(see Fig.~S9 in supplementary). Therefore an effective lockdown policy in MP may be a strict implementation of lockdown in the red and orange zones, rapid COVID-19 testing in the community and reducing hospital-based transmission by following guidelines from WHO~\cite{WHOhospitaltransmission}. For RJ, PRCC and its corresponding $p$-value suggested that only $\kappa$ have significant positive correlation with the lockdown effect~(see Fig.~S10 in supplementary). No significant correlation with hospital-based parameters may be due to existence of low percentage of the symptomatic infected population in the community (see Table~\ref{Tab:estimated-parameters-Table}) and that leads to low percentage of notified \& hospitalized based COVID-19 transmission. Therefore, RJ Govt. may focused more on contact tracing in the community with relaxation may be given in the Green and Orange zones to increase the percentage of symptomatic infected in the community. For GJ, PRCC and its corresponding $p$-value suggested that all these four parameters have significant influence on the lockdown effect (see Fig.~S11 in supplementary). Furthermore, significant negative correlation of $\beta_{2}$, and $\rho$ with the lockdown effect~(see Fig.~S11 in supplementary) indicate that only home quarantined the community may not be sufficient to reduce COVID-19 transmission in GJ. Govt. of GJ and policy makers may also have to focus on reducing the transmission occurring from hospital premises based on the guidelines from the WHO \cite{WHOhospitaltransmission}. For UP, PRCC and its corresponding $p$-value suggested that $\beta_{2}$, $\rho$, and $\kappa$ are the main parameters that are influencing the lockdown effect (see Fig.~S12 in supplementary). Moreover, significant negative correlation of $\beta_{2}$ and $\rho$ with the lockdown effect and as well as significant positive correlation of $\kappa$ with the lockdown effect~(see Fig.~S12 in supplementary) implies that an effective lockdown policy in UP may be a combination of lockdown (relaxation in Green zone), contact tracing in community with effort in reducing hospital-based transmission following the WHO guidelines~\cite{WHOhospitaltransmission}. Finally for IND, PRCC and its corresponding $p$-value suggested that $\beta_{2}$ and $\rho$ are the main parameters that are most influencing the lockdown effect (see Fig.~\ref{Fig:Sensitivity-IND}). Therefore, only home quarantined the community may not be sufficient to reduce COVID-19 transmission in IND. Govt. of IND and the policy makers may also have to focus on reducing the transmission occurring from hospital premises based on the guidelines from the WHO \cite{WHOhospitaltransmission}.        

\section*{Conclusion} 
Our analysis of the mechanistic model with hospital-based COVID-19 transmission suggest that most of the new infections occurring in India as well most of the states are currently undetected. Furthermore, a global sensitivity analysis of two epidemiologically controllable parameters from the hospital-based transmission on the basic reproduction nuber ($R_{0}$), indicate that if appropriate preventive measures are not taken immediately, a much larger COVID-19 outbreak may trigger due to the transmission occurring from the hospitalized \& notified based contacts. Moreover, our ensemble forecast model~(\ref{Eq:ensemble}) predicted a substantial percentage of increase in the COVID-19 notified cases during May 3, 2020 till May 20, 2020, (see Table~\ref{Tab:cases-preiction-Table}) in most of these locations. In Rajasthan, trend of the forecast data (see Fig~S4 in supplementary) during May 3, 2020 till May 20, 2020, is showing a decreasing trend. This is may be due to low number of hospitalized and reported cases in this state (see Table~\ref{Tab:estimated-parameters-Table}). However, cases may rise in Rajasthan if relaxation in lockdown is applied. Furthermore, trend of the forecast data in overall India (see Fig~\ref{Fig:Forecast-IND}) during May 3, 2020 till May 20, 2020, indicating the fact that~\textbf{reaching the peak of the COVID-19 epidemic curve may be a long way ahead for India}. Finally, based on our results of global sensitivity analysis of the four important epidemiologically measurable \& controllable parameters on the lockdown effect, we are suggesting the following policy that may reduce the threat of a larger COVID-19 outbreak in the coming days:\\

\textbf{Effective Lockdown Policy:} Dividing different states into three clusters (red, orange, and green) is well appreciated as it increases the percentage of symptomatic infection in the community. However, more COVID-19 testing is needed as it increases the number of notified \& hospitalized cases over the states. It is much easier to reduce hospital-based transmission in compare to community transmission. To reduce the hospitalized \& notified based contacts, an efficient disaster management team is required. They will continuously monitor the situations in different hospitals and quarantine centers across India. This team must ensure that proper safety measures are being followed based on the guidelines provided by ICMR and WHO~\cite{WHOhospitaltransmission}.

\section*{Conflict of interests}
The authors declare that they have no conflicts of interest.
\section*{Acknowledgments}
Dr. Tridip Sardar acknowledges the Science \& Engineering Research Board (SERB) major project grant  (File No: EEQ/2019/000008 dt. 4/11/2019), Govt. of India.\\\\
The Funder had no role in study design, data collection and analysis, decision to publish, or preparation of the manuscript.

\clearpage
\bibliographystyle{ieeetr}
\biboptions{square}
\bibliography{covid19_india}
\clearpage

\begin{center}
	\section*{\Large{\underline{Figures}}}
\end{center}

\begin{figure}[ht]
	\captionsetup{width=1.1\textwidth}
	%\begin{center}
		\includegraphics[width=1.0\textwidth]{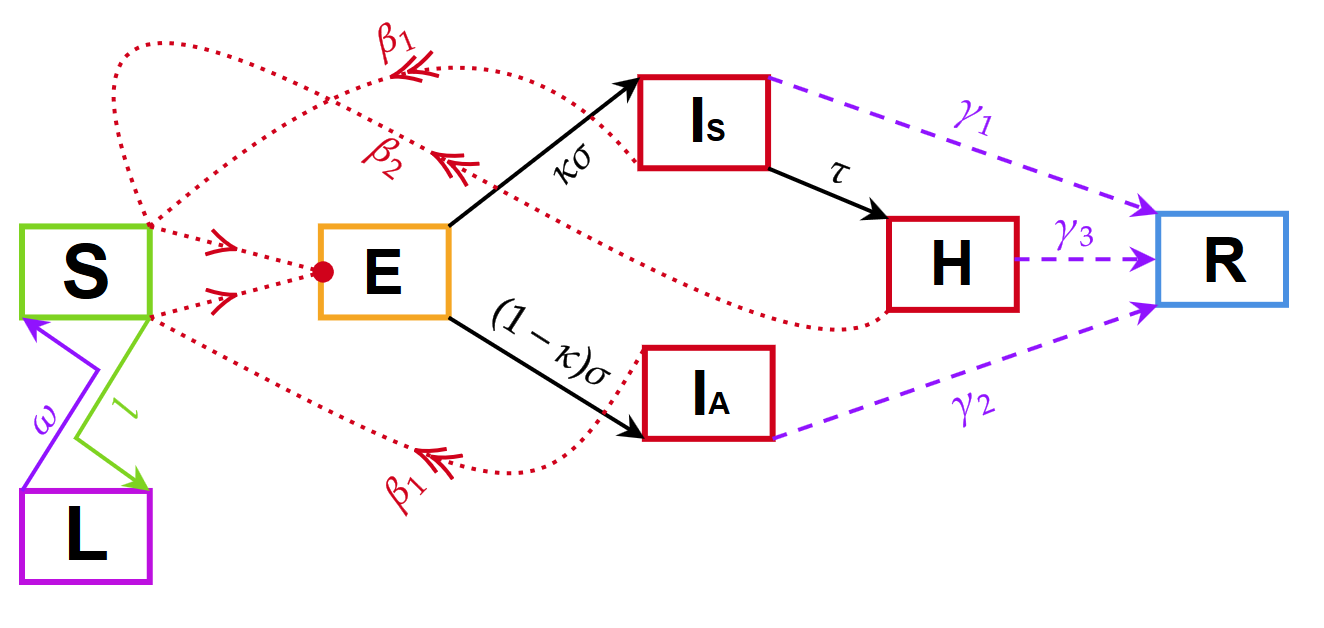}
	%\end{center}
	\caption{A Flow diagram of the mechanistic ODE model with hospital-based COVID-19 transmission and lockdown effect. Different class of population shown in this figure are $S$:~Susceptible population, $E$:~Exposed population, $I_{S}$:~COVID-19 symptomatic infected population, $I_{A}$:~COVID-19 asymptomatic infected population, $H$:~Notified \& Hospital individuals suffering from COVID-19 infection, $R$:~COVID-19 recovered population, and $L$:~Home quarantined susceptible population during lockdown, respectively. Epidemiological information of the parameters shown in this figure are provided in Table~\ref{tab:mod1}.}
	\label{Fig:Flow_India_covid}
\end{figure}

\begin{figure}[ht]
	\captionsetup{width=1.1\textwidth}
	%\begin{center}
	{\includegraphics[width=1.15\textwidth]{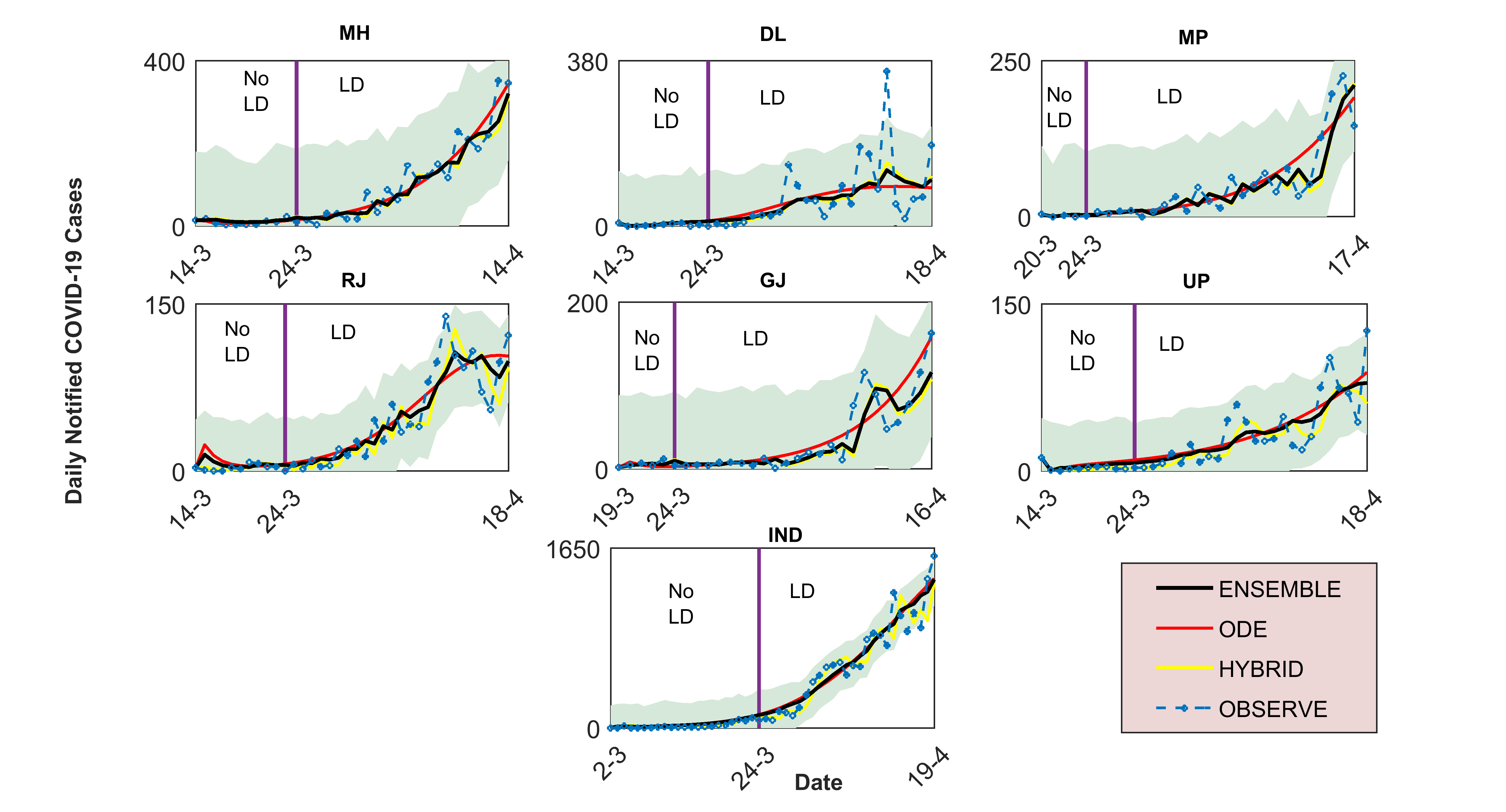}}
	%\end{center}
	\caption{Combination of mechanistic ODE models~(Fig~\ref{Fig:Flow_India_covid} and supplementary), Hybrid statistical model, and Ensemble model~(\ref{Eq:ensemble}) fitting to the daily notified COVID-19 cases form six different states and overall India. Respective subscripts are \textbf{MH}: Maharashtra, \textbf{DL}: Delhi, \textbf{MP}: Madhya Pradesh, \textbf{RJ}: Rajasthan, \textbf{GJ}: Gujarat, \textbf{UP}: Uttar Pradesh, and \textbf{IND}: India. Here,~\textbf{LD} indicate the period after the lockdown implementation in overall India started at 25/03/2020 and \textbf{No LD} specifies the period before lockdown. Lockdown effect is considered only for the mechanistic ODE model (Fig~\ref{Fig:Flow_India_covid} and supplementary) and consequently in the ensemble model~(\ref{Eq:ensemble}). Shaded area specifies the 95\% confidence region.}     
\label{Fig:Model-fitting}
\end{figure}

\begin{figure}[ht]
	\captionsetup{width=1.1\textwidth}
	%\begin{center}
	{\includegraphics[width=1.15\textwidth]{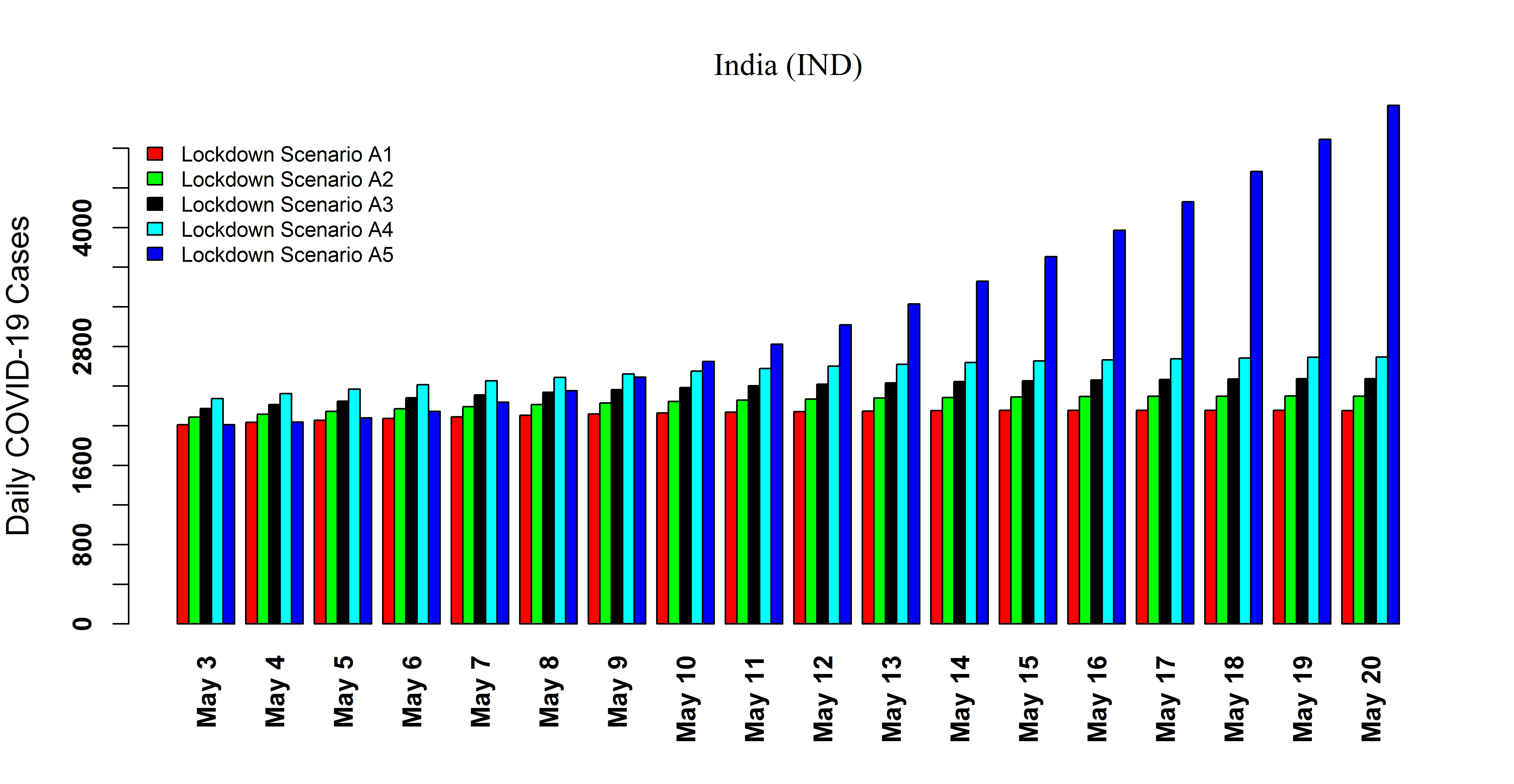}}
	%\end{center}
	\caption{Ensemble model~(\ref{Eq:ensemble}) forecast for the daily notified COVID-19 cases in India during May 3, 2020 till May 20, 2020 under five different lockdown scenarios. Under lockdown scenarios (\textbf{A2}) to (\textbf{A4}), projections are made with 10\%, 20\% and 30\% \textbf{decrement} in the current estimate of the lockdown rate, respectively (see Table~\ref{Tab:estimated-parameters-Table}). In lockdown scenario (\textbf{A1}), projections are made with current estimated lockdown rate (see Table~\ref{Tab:estimated-parameters-Table}). Finally, in the lockdown scenario (\textbf{A5}), projections are made with no lockdown during May 3, 2020 till May 20, 2020.}     
	\label{Fig:Forecast-IND}
\end{figure}

\begin{figure}[ht]
	\captionsetup{width=1.1\textwidth}
	%\begin{center}
	{\includegraphics[width=1.15\textwidth]{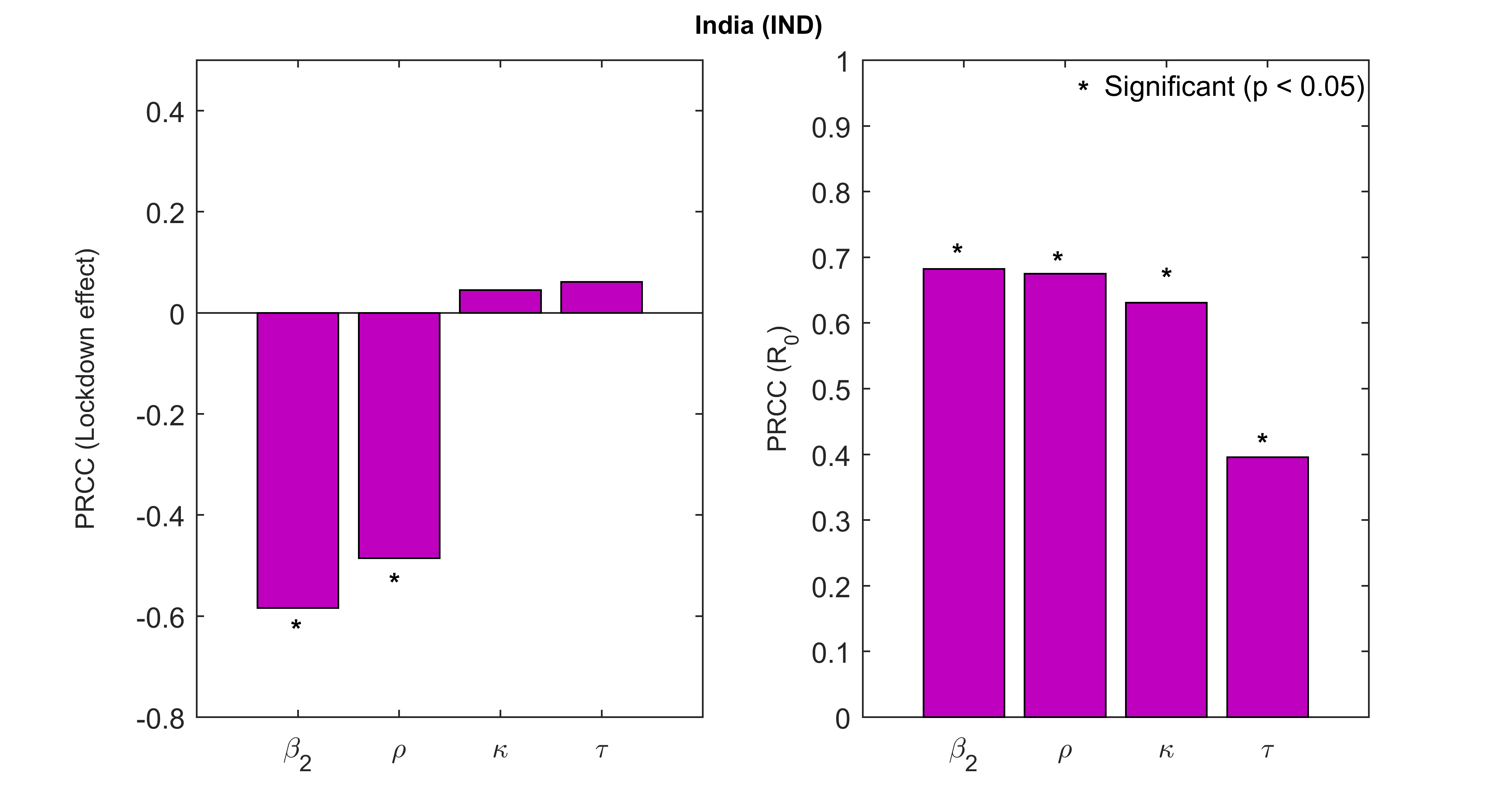}}
	%\end{center}
	\caption{Effect of uncertainty of four epidemiologically measurable and controllable parameters of the mechanistic ODE model~(see Fig~\ref{Fig:Flow_India_covid}, Table~\ref{tab:mod1} and method section) on the effect of lockdown and the basic reproduction number ($R_{0}$). Lockdown effect is measured in terms of the differences in total number of COVID-19 cases occurred during May 3, 2020 till May 20, 2020 in \textbf{India} under the lockdown scenarios (\textbf{A1}) and (\textbf{A5}), respectively~(see method section). Effect of Uncertainty of these four parameters on the two mentioned responses are measured using Partial Rank Correlation Coefficients (PRCC). $500$~samples for each parameters were drawn using Latin hypercube sampling techniques (LHS) from their respective ranges provided in Table~\ref{tab:mod1}.}     
	\label{Fig:Sensitivity-IND}
\end{figure}

\clearpage

\begin{center}
	\section*{\Large{\underline{Tables}}}
\end{center}

\begin{table}[ht]
	\captionsetup{font=normalsize}
	\captionsetup{width=1.1\textwidth}
	\tabcolsep 6pt
	\begin{center}
		\caption{Parameters with their respective epidemiological information for the mechanistic ODE model~(see Fig~\ref{Fig:Flow_India_covid} and supplementary) for COVID-19.}
		\label{tab:mod1}
		\begin{tabular}{|c| p{8.2cm} p{3.4cm} p{2cm}}
			\hline\\
			\textbf{Parameters} & \textbf{Biological Meaning} &  \textbf{Value/Ranges} & \textbf{Reference}\\\hline\\
			$\Pi$ $=$ $\mu$ $\times$ $N(0)$ & Recruitment rate of human
			population & Differs over states & - \\\\
			$\displaystyle\frac{1}{\mu}$ & Average life expectancy at birth & Differs over states & \cite{nitiaayog20}\\\\
			$\displaystyle \beta_1$ & Average transmission rate of a symptomatic and asymptomatic COVID-19 infected & (0 - 500) $day^{-1}$  & Estimated \\\\
			$\displaystyle \beta_2$ & Average transmission rate of a notified \& hospitalized COVID-19 infected & (0 - 500) $day^{-1}$  & Estimated \\\\		
			$\displaystyle \rho$ & Fraction of the susceptible population that are exposed to hospital-based transmission & 0 - 0.2 & Estimated \\\
			$\displaystyle \frac{1}{\sigma}$ & COVID-19 incubation period  & (1 - 14) days &  Estimated \\\\
			$\displaystyle \kappa$ & Fraction of the COVID-19 exposed population that become symptomatic infected & 0 - 1 & Estimated\\\\
			$\displaystyle \gamma_1$ & Average recovery rate of symptomatic
			infection & ($\displaystyle \gamma_{3}$ - 1) $day^{-1}$& Estimated\\\\
			$\displaystyle \gamma_2$ &  Average recovery rate of asymptomatic
			infection & ($\displaystyle \gamma_{3}$ - 1) $day^{-1}$ & Estimated \\\\
			$\displaystyle \tau$ & Average hospitalization rate for the COVID-19 symptomatic individuals &  (0 - 1) $day^{-1}$ & Estimated\\\\
			$\displaystyle \delta$ & Average death rate due to COVID-19 infection in hospital & Differs over states & \cite{indiacovid2020track}\\\\
			$\displaystyle \gamma_3$ & Average recovery rate of the notified \& hospitalized populations & Differs over states & \cite{indiacovid2020track}\\\\
			$\displaystyle l$ & Average lockdown rate &(0 - 0.9) $day^{-1}$ & Estimated\\\\
			$\displaystyle \frac{1}{\omega}$ & Current lockdown period in India & 40 days & \cite{timesofindia20b, financialexpress20}\\\hline			
		\end{tabular}
	\end{center}
\end{table}
\clearpage

\begin{table}[ht]
	\captionsetup{font=normalsize}
	\captionsetup{width=1.1\textwidth}
	\tabcolsep 3.5pt
	\footnotesize	
	\centering
	\caption{Estimated uninformative parameters of the mechanistic ODE model combinations~~(see Fig~\ref{Fig:Flow_India_covid} and supplementary). Respective subscripts are \textbf{MH}:~Maharashtra, \textbf{DL}:~Delhi, \textbf{MP}:~Madhya Pradesh, \textbf{RJ}:~Rajasthan, \textbf{GJ}:~Gujarat, \textbf{UP}:~Uttar Pradesh, and \textbf{IND}:~India. All data are given in the format~\textbf{Estimate (95\% CI)}.}\vspace{0.2cm}  
	\begin{tabular}{|p{1.4cm}|ccccccccc} \hline\\
		\textbf{Region} &  $\boldsymbol{\beta_{1}}$ & $\boldsymbol{\beta_{2}}$  & $\boldsymbol{\rho}$ $\times$ $100$  & $\boldsymbol{\sigma}$ & $\boldsymbol{\kappa}$ &  $\boldsymbol{\gamma_{1}}$  & $\boldsymbol{\tau\times 100}$ & $\boldsymbol{\gamma_{2}}$ & $\boldsymbol{l} \times 100$\\ \hline\\
		\textbf{MH}& \footnotesize{$\substack{2.03  \\  \\ (1.05 - 2.75)}$} & \footnotesize{$\substack{21.6  \\  \\ (2.6 - 54.7)}$} & \footnotesize{$\substack{7.98  \\  \\ (0.24-9.78)}$} & \footnotesize{$\substack{0.163  \\  \\ (0.073 - 0.196)}$} & \footnotesize{$\substack{0.118  \\  \\ (0.03 - 0.165)}$} & \footnotesize{$\substack{0.963  \\  \\ (0.144 - 0.974)}$} & \footnotesize{$\substack{5.05  \\  \\ (2.76 - 85.3)}$} & \footnotesize{$\substack{0.863  \\  \\ (0.341 - 0.991)}$}& \footnotesize{$\substack{61.5  \\  \\ (2.51 - 97.3)}$}\\\\
		\hline\\
		\textbf{DL}& \footnotesize{$\substack{1.83 \\  \\ (0.557-3.54)}$} & \footnotesize{$\substack{220.2  \\  \\ (12.97 - 488.8)}$} & \footnotesize{$\substack{7.66 \\  \\ (0.25-9.67)}$} & \footnotesize{$\substack{0.211  \\  \\ (0.075 - 0.748)}$} & \footnotesize{$\substack{0.039  \\  \\ (0.002 - 0.039)}$} & \footnotesize{$\substack{0.96  \\  \\ (0.321 - 0.987)}$} & \footnotesize{$\substack{8.5  \\  \\ (3.8 - 96.3)}$} & \footnotesize{$\substack{0.764  \\  \\ (0.328 - 0.988)}$}&\footnotesize{$\substack{70.8 \\  \\ (13.1 - 87.8)}$}\\\\
		\hline\\
		\textbf{MP}& \footnotesize{$\substack{1.45  \\  \\ (1.04 - 2.91)}$} & \footnotesize{$\substack{42.1  \\  \\ (25.4 - 486)}$} & \footnotesize{$\substack{5.88  \\  \\ (0.24 - 9.72)}$} & \footnotesize{$\substack{0.194  \\  \\ (0.073 - 0.242)}$} & \footnotesize{$\substack{0.014  \\  \\ (0.002 - 0.098)}$} & \footnotesize{$\substack{0.827  \\  \\ (0.141 - 0.980)}$} & \footnotesize{$\substack{14  \\  \\ (3.52 - 92.2)}$} & \footnotesize{$\substack{0.585  \\  \\ (0.21 - 0.978)}$}&\footnotesize{$\substack{34.8  \\  \\ (12.7 - 88.6)}$}\\\\
		\hline\\
		\textbf{RJ}& \footnotesize{$\substack{1.16  \\  \\ (0.557 - 2.74)}$} & \footnotesize{$\substack{325.6  \\  \\ (3.71 - 484)}$} & \footnotesize{$\substack{1.33  \\  \\ (0.07 - 9.56)}$} & \footnotesize{$\substack{0.937  \\  \\ (0.148 - 0.949)}$} & \footnotesize{$\substack{0.0011  \\  \\ (0.001 - 0.003)}$} & \footnotesize{$\substack{0.337 \\  \\ (0.253 - 0.988)}$} & \footnotesize{$\substack{23.2  \\  \\ (4.66 - 96.3)}$} & \footnotesize{$\substack{0.814  \\  \\ (0.244 - 0.978)}$}&\footnotesize{$\substack{21  \\  \\ (10.6 - 81.2)}$}\\\\
		\hline\\
		\textbf{GJ}& \footnotesize{$\substack{0.90  \\  \\ (0.53 - 2.10)}$} & \footnotesize{$\substack{186.3  \\  \\ (9.64 - 490)}$} & \footnotesize{$\substack{2.32  \\  \\ (0.21-9.20)}$} & \footnotesize{$\substack{0.54  \\  \\ (0.101 - 0.94)}$} & \footnotesize{$\substack{0.028  \\  \\ (0.01 - 0.092)}$} & \footnotesize{$\substack{0.84  \\  \\ (0.15 - 0.98)}$} & \footnotesize{$\substack{14  \\  \\ (4.02 - 76.2)}$} & \footnotesize{$\substack{0.52  \\  \\ (0.202 - 0.98)}$}&\footnotesize{$\substack{11.4 \\  \\ (2.31 - 32.2)}$}\\\\
		\hline\\
		\textbf{UP}& \footnotesize{$\substack{0.53  \\  \\ (0.52 - 1.92)}$} & \footnotesize{$\substack{242  \\  \\ (8.23 - 353)}$} & \footnotesize{$\substack{4.29  \\  \\ (0.1 - 9.51)}$} & \footnotesize{$\substack{0.66  \\  \\ (0.078 - 0.71)}$} & \footnotesize{$\substack{0.12  \\  \\ (0.077 - 0.95)}$} & \footnotesize{$\substack{0.78  \\  \\ (0.19 - 0.988)}$} & \footnotesize{$\substack{2.38  \\  \\ (0.2 - 5)}$} & \footnotesize{$\substack{0.408  \\  \\ (0.218 - 0.982)}$}&\footnotesize{$\substack{48.9  \\  \\ (12.3 - 87.6)}$}\\\\
		\hline\\
		\textbf{IND}& \footnotesize{$\substack{2.25  \\  \\ (1.29 - 3.24)}$} & \footnotesize{$\substack{499  \\  \\ (494 - 499.8)}$} & \footnotesize{$\substack{8.99  \\  \\ (0.32-9.70)}$} & \footnotesize{$\substack{0.098  \\  \\ (0.072 - 0.15)}$} & \footnotesize{$\substack{0.35  \\  \\ (0.07 - 0.96)}$} & \footnotesize{$\substack{0.72  \\  \\ (0.38 - 0.99)}$} & \footnotesize{$\substack{0.23  \\  \\ (0.03 - 0.53)}$} & \footnotesize{$\substack{0.995 \\  \\ (0.25 - 0.9959)}$}&\footnotesize{$\substack{76.6  \\  \\ (35.5 - 89.2)}$}\\\\
		\hline
	\end{tabular}
	\label{Tab:estimated-parameters-Table}
\end{table}

\begin{table}[ht]
	\captionsetup{font=normalsize}
	\captionsetup{width=1.1\textwidth}
	\tabcolsep 18pt	
	\centering
	\caption{Estimates of the basic, the community and the hospital reproduction numbers. The informative and uninformative parameters of the mechanistic ODE model (see Fig~\ref{Fig:Flow_India_covid} and supplementary) during the estimation of the different reproduction numbers are taken from Table~\ref{tab:mod1} and Table~\ref{Tab:estimated-parameters-Table}, respectively. Respective subscripts \textbf{MH}, \textbf{DL}, \textbf{MP}, \textbf{RJ}, \textbf{GJ}, \textbf{UP}, and \textbf{IND} are same as Table~\ref{Tab:estimated-parameters-Table}. All data are given in the format~\textbf{Estimate (95\% CI)}.}\vspace{0.3cm}
\label{Tab:reproduction-numbers}	
	\begin{tabular}{|c|ccccc} \hline\\	
		
		\textbf{Region} & $\boldsymbol{R_0}$& $\boldsymbol{R_{C}}$ & \% $\boldsymbol{R_0}$ & $\boldsymbol{R_{H}}$ & \% $\boldsymbol{R_0}$\\\hline
		\\
		\textbf{MH} & $\substack{2.371  \\\\ (2.095 - 4.005)}$ & $\substack{2.309  \\\\ (1.542 - 3.894)}$ & $\substack{97.37  \\\\ (55.46 - 99.93)}$ & $\substack{0.0623  \\\\ (0.0019 - 1.254)}$ & $\substack{2.63  \\\\ (0.1 - 44.54)}$\\\\
		\hline\\
		\textbf{DL} & $\substack{2.54 \\\\ (1.37 - 5.52)}$ & $\substack{2.37  \\\\ (0.934 - 5.49)}$ & $\substack{93.46  \\\\ (56.91 - 99.95)}$ & $\substack{0.1658  \\\\ (0.0016 - 0.8236)}$ & $\substack{6.53  \\\\ (0.048 - 43.09)}$\\\\
		\hline\\
		\textbf{MP}& $\substack{2.497 \\\\ (2.21 - 6.11)}$ & $\substack{2.467  \\\\ (1.68 - 5.63)}$ & $\substack{98.82  \\\\ (49.22 - 99.81)}$ & $\substack{0.0296  \\\\ (0.0062 - 2.045)}$ & $\substack{1.18  \\\\ (0.19 - 50.78)}$ \\\\
		\hline\\
		\textbf{RJ}& $\substack{1.43 \\\\ (1.42 - 3.78)}$ & $\substack{1.42  \\\\ (1.405 - 3.76)}$ & $\substack{99.46  \\\\ (94.40 - 100)}$ & $\substack{0.01 \\\\ (0 - 0.068)}$ & $\substack{0.54  \\\\ (0 - 3.66)}$ \\\\
		\hline\\
		\textbf{GJ}& $\substack{1.835 \\\\ (1.51 - 4.86)}$ & $\substack{1.706  \\\\ (0.895 - 3.72)}$ & $\substack{93.02  \\\\ (31.4 - 99.56)}$ & $\substack{0.128 \\\\ (0.0083 - 2.68)}$ & $\substack{6.98 \\\\ (0.43 - 68.60)}$ \\\\
		\hline\\
		\textbf{UP}& $\substack{1.46 \\\\ (1.35 - 3.14)}$ & $\substack{1.22  \\\\ (0.787 - 2.88)}$ & $\substack{84  \\\\ (39.21 - 99.96)}$ & $\substack{0.233  \\\\ (0.001 - 1.47)}$ & $\substack{16  \\\\ (0.039 - 60.79)}$ \\\\
		\hline\\
		\textbf{IND}& $\substack{2.81 \\\\ (2.15 - 4.94)}$ & $\substack{2.56  \\\\ (1.92 - 4.86)}$ & $\substack{91.28  \\\\ (84.63 - 99.84)}$ & $\substack{0.245 \\\\ (0.0048 - 0.388)}$ & $\substack{8.72  \\\\ (0.16 - 15.37)}$ \\\\
		\hline		
	\end{tabular}
	
	\label{Tab:estimated-R0-Table}
\end{table}
\clearpage
\begin{table}[ht]
	\captionsetup{font=normalsize}
	\captionsetup{width=1.1\textwidth}
	\tabcolsep 5.5pt
	\footnotesize	
	\caption{Ensemble model~(\ref{Eq:ensemble}) forecast of the total notified COVID-19 cases during May 3, 2020 till May 20, 2020. Respective subscripts \textbf{MH}, \textbf{DL}, \textbf{MP}, \textbf{RJ}, \textbf{GJ}, \textbf{UP}, and \textbf{IND} are same as Table~\ref{Tab:estimated-parameters-Table}. Regions where, current lockdown rate~($\boldsymbol{\big\downarrow}$) implies the ensemble model~(\ref{Eq:ensemble}) projections for the Scenarios (A2) to (A4) with 10\%, 20\% and 30\% decrement in the current estimate of lockdown rate (see Table~\ref{Tab:estimated-parameters-Table}) during the mentioned period. Whereas, current lockdown rate~($\big\uparrow$) implies the ensemble model~(\ref{Eq:ensemble}) projections for the Scenarios (A2) to (A4) with 10\%, 20\% and 30\% increment in the current estimate of lockdown rate (see Table~\ref{Tab:estimated-parameters-Table}) during the mentioned period. Scenario (A1) implies the ensemble model~(\ref{Eq:ensemble}) forecast with the current estimate of the lockdown rate (see Table~\ref{Tab:estimated-parameters-Table}) during May 3, 2020 till May 20, 2020. Finally, Scenario (A5) implies the ensemble model~(\ref{Eq:ensemble}) forecast with no lockdown during May 3, 2020 till May 20, 2020. All data are provided in the format~\textbf{Estimate (95\% CI)}.}
	
	\begin{tabular}{|p{1.5cm}|p{3.2cm} c c c c c} \hline\\
		\textbf{Region} & \textbf{Current lockdown rate} &\textbf{Scenario~A1}
		& \textbf{Scenario~A2}
		& \textbf{Scenario~A3} & \textbf{Scenario~A4} & \textbf{Scenario~A5} 
		\\\\ \hline\\
		\textbf{MH}&\contour{red}{$\big\downarrow$}  & \footnotesize{$\substack{16790 \\  \\  (16056 - 18727)}$} & \footnotesize{$\substack{17410  \\  \\  (16499 - 19812)}$} & \footnotesize{$\substack{18147  \\  \\  (17026 - 21101)}$} & \footnotesize{$\substack{19036  \\  \\  (17662 - 22656)}$} & \footnotesize{$\substack{23000  \\  \\  (20499 - 29596)}$}\\\\
		\hline\\
		\textbf{DL}& \contour{red}{$\big\downarrow$}& \footnotesize{$\substack{2642  \\  \\ (1791 - 3294)}$} & \footnotesize{$\substack{1840 \\  \\ (1337 - 3297)}$} & \footnotesize{$\substack{2693 \\  \\ (1901 - 3300)}$} & \footnotesize{$\substack{2729  \\  \\ (1979 - 3304)}$} & \footnotesize{$\substack{2886  \\  \\  (2319 - 3320)}$}\\\\
		\hline\\
		\textbf{MP}& \contour{red}{$\big\uparrow$} & \footnotesize{$\substack{4036 \\  \\ (3599 - 4578)}$} & \footnotesize{$\substack{3924 \\  \\ (3531 - 4410)}$} & \footnotesize{$\substack{3825 \\  \\  (3472 - 4263)}$} & \footnotesize{$\substack{3738 \\  \\ (3419 - 4134)}$} & \footnotesize{$\substack{5058  \\  \\  (4216 - 6101)}$}\\\\
		\hline\\
		\textbf{RJ}& \contour{red}{$\big\uparrow$} & \footnotesize{$\substack{1184 \\  \\(654 - 1624)}$} & \footnotesize{$\substack{1156 \\  \\(611 - 1608)}$} & \footnotesize{$\substack{1133\\  \\ (576 - 1595)}$} & \footnotesize{$\substack{1114 \\  \\(547 - 1585)}$} & \footnotesize{$\substack{1318 \\  \\(858 - 1699)}$}\\\\
		\hline\\
		\textbf{GJ}& \contour{red}{$\big\uparrow$} & \footnotesize{$\substack{7707 \\  \\(7689 - 12589)}$} & \footnotesize{$\substack{7444 \\  \\(7427 - 11991)}$} & \footnotesize{$\substack{7206\\  \\ (7191 - 11451)}$} & \footnotesize{$\substack{6990 \\  \\(6976 - 10961)}$} & \footnotesize{$\substack{9337 \\  \\(9312 - 16292)}$}\\\\
		\hline\\
		\textbf{UP}& \contour{red}{$\big\uparrow$} & \footnotesize{$\substack{5599 \\  \\(3656 - 6478)}$} & \footnotesize{$\substack{5383 \\  \\(3564 - 6206)}$} & \footnotesize{$\substack{5184\\  \\ (3479 - 5954)}$} & \footnotesize{$\substack{4999 \\  \\(3401 - 5722)}$} & \footnotesize{$\substack{7217 \\  \\(4343 - 8517)}$}\\\\
		\hline\\
		\textbf{IND}& \contour{red}{$\big\downarrow$} & \footnotesize{$\substack{38134 \\  \\(36550 - 45296)}$} & \footnotesize{$\substack{40278 \\  \\(38877 - 46612)}$} & \footnotesize{$\substack{42824\\  \\ (41640 - 48174)}$} & \footnotesize{$\substack{45896 \\  \\(44975 - 50059)}$} & \footnotesize{$\substack{57159 \\  \\(56971 - 57201)}$}\\\\
		\hline
	\end{tabular}
	\label{Tab:cases-preiction-Table}
\end{table}
%%%%%%%%%%%%%%%%%%%%%%%%%%%%%%%% Supplementary Method %%%%%%%%%%%%%%%%%%%%%
\clearpage
\setcounter{equation}{0}
\setcounter{figure}{0}  
\setcounter{table}{0} 
\renewcommand{\theequation}{S-\arabic{equation}}
\renewcommand{\thetable}{S\arabic{table}}
\renewcommand{\thelemma}{S\arabic{lemma}}
\renewcommand{\thetheorem}{S\arabic{theorem}}
\renewcommand{\theproposition}{S\arabic{proposition}}
\renewcommand{\thefigure}{S\arabic{figure}}

\begin{center}
	\section*{\Large{Supplementary Method}}	
\end{center}

\subsection*{\textbf{Model without lock-down}}
The basic structure of our model is an extension of our earlier model~\cite{sardar2020assessment} by considering hospital-based transmission. As hospitalized \& notified infected ($H$) can only contact with a very small fraction of the susceptible class therefore, we assume the effective contact for a single hospitalized \& notified infected is $\displaystyle \beta_{2} \frac{(\rho S)}{N}$ and $\rho$ is the fraction of susceptible that are in contact with the hospitalized \& notified infected. We also assume that effective contact of a single symptomatic ($I_{S}$) and a asymptomatic individual is same and at a rate $\displaystyle \beta_{1} \frac{ S}{N}$. Therefore, total human population $N(t)$ at time $t$ is subdivided into six mutually exclusive sub-classes namely, susceptible ($S(t)$), exposed ($E(t)$), symptomatic ($I_{S}(t)$), asymptomatic ($I_{A}(t)$), hospitalized \& notified ($H(t)$), and recovered ($R(t)$), respectively. Interaction between different sub-classes are provided by the following system of differential equation:

\begin{eqnarray}\label{EQ:eqn 2.1}
\displaystyle{ \frac{dS}{dt} } &=& \Pi-\frac{\beta_1 S I_S}{N}-\frac{\beta_1 S I_A}{N}-\frac{\beta_2  (\rho S) H}{N}-\mu S\nonumber\\
\displaystyle{ \frac{dE}{dt} } &=& \frac{\beta_1 S I_S}{N}+\frac{\beta_1 S I_A}{N}+\frac{\beta_2  (\rho S) H}{N}-(\mu+\sigma)E,\nonumber \\
\displaystyle{ \frac{dI_{S}}{dt} } &=& \kappa \sigma E-(\tau+\mu+\gamma_1)I_S ,\nonumber \\
\displaystyle{ \frac{dI_A}{dt} } &=&  (1-\kappa)\sigma E-(\mu+\gamma_2)I_A, \\
\displaystyle{\frac{dH}{dt} } &=& \tau I_S-(\gamma_3+\delta+\mu)H, \nonumber \\
\displaystyle{ \frac{dR}{dt} } &=& \gamma_1 I_S+\gamma_2 I_A+\gamma_3 H-\mu R.\nonumber
\end{eqnarray}  

\subsection*{\textbf{Model with lock-down}}
Lock-down effect in the model~(\ref{EQ:eqn 2.1}) can be included by home quarantine a fraction of the susceptible population at a rate $l$. This home-quarantine populations are considered as a new subclass namely, lock-down class $L$. We assume that population in the lock-down compartment are home quarantined and they do not interact with other populations. Therefore, effective contact of a single symptomatic ($I_{S}$), asymptomatic ($I_{A}$), and hospitalized \& notified infected are $\displaystyle \beta_{1} \frac{ S}{(N-L)}$, $\displaystyle \beta_{1} \frac{ S}{(N-L)}$, and $\displaystyle \beta_{1} \frac{ (\rho S)}{(N-L)}$, respectively. The interaction of all these sub-classes can be expressed as an lock-down ODE model as follows:    

\begin{eqnarray}\label{EQ:eqn 2.2}
\displaystyle{ \frac{dS}{dt} } &=& \Pi+\omega L-\frac{\beta_1 S I_S}{(N-L)}-\frac{\beta_1 S I_A}{(N-L)}-\frac{ \beta_2  (\rho S) H}{(N-L)}-(\mu+l) S\nonumber\\
\displaystyle{ \frac{dL}{dt} } &=& l S-(\mu+\omega)L,\nonumber \\
\displaystyle{ \frac{dE}{dt} } &=& \frac{\beta_1 S I_S}{(N-L)}+\frac{\beta_1 S I_A}{(N-L)}+\frac{ \beta_2  (\rho S) H}{(N-L)}-(\mu+\sigma)E,\nonumber \\
\displaystyle{ \frac{dI_s}{dt} } &=& \kappa \sigma E-(\tau+\mu+\gamma_1)I_S ,\nonumber \\
\displaystyle{ \frac{dI_A}{dt} } &=&  (1-\kappa)\sigma E-(\mu+\gamma_2)I_A, \\
\displaystyle{\frac{dH}{dt} } &=& \tau I_S-(\delta+\mu+\gamma_3)H, \nonumber \\
\displaystyle{ \frac{dR}{dt} } &=& \gamma_1 I_S+\gamma_2 I_A+\gamma_3 H-\mu R.\nonumber
\end{eqnarray} 

A flow diagram of the model~(\ref{EQ:eqn 2.1}) and~(\ref{EQ:eqn 2.2})  are provided in \textbf{Fig~1} (see main text). Biological interpretations of the model~(\ref{EQ:eqn 2.1}) and~(\ref{EQ:eqn 2.2}) parameters are provided in \textbf{Table~1} (see main text).

\subsection*{Positivity and boundedness of the solution for the Model \eqref{EQ:eqn 2.1}}
In this section, we provided a proof for the positivity and boundedness of solutions of the system \eqref{EQ:eqn 2.1} with initial conditions $(S(0),E(0),I_S(0),I_A(0),H(0),R(0))^T\in \mathbb{R}_{+}^6$. We first state the following lemma.
\begin{lemma}(\cite{yang1996permanence}) 
	Suppose $\Omega \subset \mathbb{R} \times \mathbb{C}^n$ is open, $f_i \in C(\Omega, \mathbb{R}), i=1,2,3,...,n$. If $f_i|_{x_i(t)=0,X_t \in \mathbb{C}_{+0}^n}\geq 0$, $X_t=(x_{1t},x_{2t},.....,x_{1n})^T, i=1,2,3,....,n$, then $\mathbb{C}_{+0}^n\lbrace \phi=(\phi_1,.....,\phi_n):\phi \in \mathbb{C}([-\tau,0],\mathbb{R}_{+0}^n)\rbrace$ is the invariant domain of the following equations
	\begin{align*}
	\frac{dx_i(t)}{dt}=f_i(t,X_t), t\geq \sigma, i=1,2,3,...,n.
	\end{align*}
	where $\mathbb{R}_{+0}^n=\lbrace (x_1,....x_n): x_i\geq 0, i=1,....,n \rbrace$.
	\label{lma1}	
\end{lemma}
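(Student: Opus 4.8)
The plan is to prove Lemma \ref{lma1} as a positive-invariance (Nagumo-type) result for the delay system $\dot{x}_i = f_i(t, X_t)$: the sign condition imposed on each $f_i$ along the face $\{x_i = 0\}$ of the cone is precisely the subtangential condition that forces the nonnegative cone $\mathbb{C}_{+0}^n$ to be forward invariant. First I would fix the functional-differential-equation framework, working in the Banach space $\mathbb{C}([-\tau,0],\mathbb{R}^n)$ with the supremum norm, take initial data $\phi \in \mathbb{C}_{+0}^n$, and invoke local existence of a solution with history segment $X_t$ on a maximal forward interval $[\sigma, \sigma + a)$ (continuity of each $f_i$ supplies existence; I would assume enough regularity, or else extract a subsequence in the limiting step below, to carry through continuous dependence). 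The goal is then to show $x_i(t) \geq 0$ for every index $i$ and all $t$ in the existence interval.

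Next I would expose the obstacle through the naive first-exit argument. Suppose the solution leaves the cone and set $t_1 = \inf\{t > \sigma : x_i(t) < 0 \text{ for some } i\}$. By continuity $x_i(t_1) = 0$ for the offending index, and since $t_1$ is the first exit time the entire history still satisfies $X_{t_1} \in \mathbb{C}_{+0}^n$; the hypothesis then gives $\dot{x}_i(t_1) = f_i(t_1, X_{t_1}) \geq 0$. The difficulty, and the main technical point to overcome, is that a merely nonnegative derivative does not by itself forbid the trajectory from dipping below zero (for instance if $\dot{x}_i(t_1)=0$ with negative curvature), so this direct argument is inconclusive.

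To close the gap I would use a strict $\epsilon$-perturbation. For $\epsilon > 0$ consider the modified system
\begin{align*}
\dot{x}_i^{\epsilon}(t) = f_i(t, X_t^{\epsilon}) + \epsilon, \qquad i = 1,\dots,n,
\end{align*}
with the same initial data $\phi$. Along the face $\{x_i = 0\} \cap \mathbb{C}_{+0}^n$ the hypothesis now yields the \emph{strict} inequality $\dot{x}_i^{\epsilon} = f_i + \epsilon \geq \epsilon > 0$. Repeating the first-exit argument for $x^{\epsilon}$ at its first exit time $t_1^{\epsilon}$ produces a genuine contradiction: the left-hand derivative of $x_i^{\epsilon}$ at $t_1^{\epsilon}$ must be $\leq 0$ because $x_i^{\epsilon}$ has just reached $0$ from above, yet the equation forces it to be $\geq \epsilon > 0$. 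Hence $x_i^{\epsilon}(t) \geq 0$ throughout the existence interval for every $\epsilon > 0$.

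Finally I would pass to the limit $\epsilon \to 0^{+}$. By continuous dependence of solutions of functional differential equations on the perturbation parameter, $x^{\epsilon} \to x$ uniformly on compact subintervals (extracting a subsequence if uniqueness is not assumed), and since each $x^{\epsilon}$ is nonnegative the limit $x$ is nonnegative as well. This shows $X_t \in \mathbb{C}_{+0}^n$ for all $t$ in the forward existence interval, i.e.\ $\mathbb{C}_{+0}^n$ is the invariant domain of the system. The only delicate step is the first-exit/perturbation argument above; once local existence and continuous dependence are in hand, the limiting passage is routine.
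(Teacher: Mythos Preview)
The paper does not prove this lemma at all: it is quoted verbatim from \cite{yang1996permanence} and used as a black box to establish Proposition~S0.1 (invariance of $\mathbb{R}_+^6$ for the COVID model). So there is no ``paper's own proof'' to compare against; the authors simply cite the result and move on.

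That said, your proposal is the standard and correct way to prove such a statement. The Nagumo-type first-exit argument combined with an $\epsilon$-perturbation to upgrade the boundary condition $f_i|_{x_i=0}\geq 0$ to a strict inequality, followed by a limit via continuous dependence, is exactly how positive-invariance results for the nonnegative cone are established in the functional-differential-equation setting (see, e.g., Smith's monograph on monotone dynamical systems or Hale's FDE book). Your identification of the obstacle---that $\dot{x}_i(t_1)\geq 0$ alone does not preclude exit---and your remedy are both on target. The only caveats you already flag yourself: local existence and continuous dependence need enough regularity (typically local Lipschitz in the state argument, or else a Peano-type existence plus a subsequence extraction), and the lemma as stated in the paper only assumes $f_i\in C(\Omega,\mathbb{R})$, so strictly speaking one must work with the Peano existence theorem and pass to limits along subsequences rather than invoke uniqueness. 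None of this is a gap in your plan, just bookkeeping.
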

\begin{proposition}
	The system \eqref{EQ:eqn 2.1} is invariant in $\mathbb{R}_{+}^6$.
\end{proposition}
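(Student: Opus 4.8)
The plan is to apply Lemma~\ref{lma1} directly to the system \eqref{EQ:eqn 2.1}, taking $n=6$ (with no delay, i.e. $\tau=0$, so $\mathbb{C}^n$ collapses to $\mathbb{R}^n$) and identifying the state vector as $X=(S,E,I_S,I_A,H,R)^T$. According to the lemma, to conclude that $\mathbb{R}_{+0}^6$ is invariant it suffices to check, for each component $i$, that the corresponding right-hand side $f_i$ is nonnegative whenever that component vanishes and all components lie in $\mathbb{R}_{+0}^6$. So the proof is essentially a componentwise verification, which I would carry out one equation at a time.

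First I would write out the six vector-field components $f_1,\dots,f_6$ from \eqref{EQ:eqn 2.1}. Then, for the $S$-equation, setting $S=0$ kills every negative term (each contains a factor $S$), leaving $\left.\frac{dS}{dt}\right|_{S=0}=\Pi\geq 0$. For the $E$-equation, setting $E=0$ removes the $-(\mu+\sigma)E$ term and leaves $\frac{\beta_1 S I_S}{N}+\frac{\beta_1 S I_A}{N}+\frac{\beta_2(\rho S)H}{N}\geq 0$ since all variables and parameters are nonnegative. For $I_S$: setting $I_S=0$ gives $\kappa\sigma E\geq 0$. For $I_A$: setting $I_A=0$ gives $(1-\kappa)\sigma E\geq 0$ (using $0\le\kappa\le 1$). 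For $H$: setting $H=0$ gives $\tau I_S\geq 0$. For $R$: setting $R=0$ gives $\gamma_1 I_S+\gamma_2 I_A+\gamma_3 H\geq 0$. Each of these is immediate, so every hypothesis of Lemma~\ref{lma1} is met and the invariance of $\mathbb{R}_{+}^6$ follows. I would also note in passing that $N=S+E+I_S+I_A+H+R$ appears in denominators, so one should remark that the vector field is well-defined (e.g. on the set where $N>0$), which holds on the relevant region since $\Pi>0$ keeps the population positive.

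The step requiring the most care — though it is still routine — is making sure the sign condition is checked on the boundary \emph{with all other coordinates still allowed to be zero}: for instance, in the $I_A$-equation one must use $\kappa\le 1$ so that $(1-\kappa)\sigma E$ is genuinely nonnegative, and one should confirm the model's parameter table indeed constrains $\kappa\in[0,1]$ (it does, per Table~\ref{tab:mod1}). There is no real analytic obstacle here; the only thing to be slightly careful about is the well-definedness of the $N$ in the denominators, which is why I would phrase the result on $\mathbb{R}_{+}^6$ (or its intersection with $\{N>0\}$) rather than the closed nonnegative orthant including the origin. Boundedness, if needed, would follow separately by adding the equations to get $\frac{dN}{dt}=\Pi-\mu N-\delta H\le \Pi-\mu N$, hence $\limsup_{t\to\infty}N(t)\le \Pi/\mu$, giving a positively invariant compact absorbing set; but the proposition as stated only asks for invariance of $\mathbb{R}_{+}^6$, so the componentwise check via Lemma~\ref{lma1} is the whole argument.
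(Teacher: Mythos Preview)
Your proposal is correct and follows essentially the same approach as the paper: both apply Lemma~\ref{lma1} by checking, for each of the six components, that the right-hand side is nonnegative when that component vanishes and the others are nonnegative, yielding exactly the same six boundary inequalities. Your additional remarks on $\kappa\in[0,1]$ and the well-definedness of $N$ in the denominators are not in the paper's proof but are reasonable clarifications rather than a different method.
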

\begin{proof}
	By re-writing the system \eqref{EQ:eqn 2.1}, we have
	\begin{eqnarray}
	\frac{dX}{dt} & =B(X(t)), X(0)=X_0\geq 0
	\label{EQ:eqn A.1}
	\end{eqnarray}
	$ B(X(t))=(B_1(X),B_1(X),...,B_6(X))^T$\\
	We note that
	\begin{align*}
	\frac{dS}{dt}|_{S=0}=\Pi \geq 0,	
	\frac{dE}{dt}|_{E=0}=\frac{(\beta_1 I_S+\beta_1 I_A+\rho \beta_2 H)S}{N}\geq 0,\\
	\frac{dI_S}{dt}|_{I_S=0}=\kappa \sigma E \geq 0,
	\frac{dI_A}{dt}|_{I_A=0}=(1-\kappa)\sigma E\geq 0,\\ \frac{dH}{dt}|_{H=0}=\tau I_S \geq 0,
	\frac{dR}{dt}|_{R=0}=\gamma_1 I_S+\gamma_2 I_A+\gamma_3 H \geq 0.
	\end{align*}
	Then it follows from the Lemma \ref{lma1} that $\mathbb{R}_{+}^6$ is an invariant set.
\end{proof}

\begin{lemma}
	The system \eqref{EQ:eqn 2.1} is bounded in the region\\ $\Omega=\lbrace(S,E,I_S,I_A,H,R\in \mathbb{R}_+^{6}|S+E+I_S+I_A+H+R\leq \frac{\Pi}{\mu}\rbrace$
\end{lemma}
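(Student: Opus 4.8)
The plan is to work with the total population $N(t)=S(t)+E(t)+I_S(t)+I_A(t)+H(t)+R(t)$ and reduce the problem to a single scalar differential inequality. First I would add the six equations of system~\eqref{EQ:eqn 2.1}; all the incidence terms and inter-compartmental transfer terms cancel in pairs, leaving
\begin{equation*}
\frac{dN}{dt} = \Pi - \mu N - \delta H.
\end{equation*}
By the preceding proposition, a solution starting in $\mathbb{R}_+^6$ remains in $\mathbb{R}_+^6$, so in particular $H(t)\ge 0$ for all $t\ge 0$, and therefore
\begin{equation*}
\frac{dN}{dt} \le \Pi - \mu N, \qquad t\ge 0.
\end{equation*}

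Next I would integrate this linear differential inequality, either by Gronwall's inequality or by multiplying through by the integrating factor $e^{\mu t}$, to obtain
\begin{equation*}
N(t) \le \frac{\Pi}{\mu} + \left( N(0) - \frac{\Pi}{\mu} \right) e^{-\mu t}, \qquad t\ge 0.
\end{equation*}
From this estimate two things follow: if $N(0)\le \Pi/\mu$ then $N(t)\le \Pi/\mu$ for every $t\ge 0$, so $\Omega$ is positively invariant; and for arbitrary nonnegative initial data $\limsup_{t\to\infty} N(t)\le \Pi/\mu$, so every trajectory eventually enters and stays in $\Omega$. In either case the nonnegative components $S,E,I_S,I_A,H,R$ are individually bounded above by $N(t)$, hence by $\max\{N(0),\Pi/\mu\}$, which gives the claimed boundedness and identifies $\Omega$ as the absorbing region.

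I do not anticipate any real obstacle. The only step that needs a word of justification is discarding the $-\delta H$ term, which relies on the positivity established in the proposition immediately above; and one should be careful to apply the comparison/Gronwall argument to the scalar quantity $N$ rather than to the individual state variables. Everything else is the routine integration of a first-order linear inequality.
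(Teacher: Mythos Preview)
Your proposal is correct and follows exactly the paper's argument: sum the six equations to obtain $\dfrac{dN}{dt}=\Pi-\mu N-\delta H\le \Pi-\mu N$ (using $H\ge 0$), and conclude $\limsup_{t\to\infty}N(t)\le \Pi/\mu$. The paper's proof is just a terser version of what you wrote, omitting the explicit integrating-factor computation and the discussion of positive invariance that you (helpfully) spelled out.
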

\begin{proof}
	We observed from the system that
	\begin{align*}
	&\frac{dN}{dt}=\Pi-\mu N-\delta H\leq \Pi-\mu N\\
	& \Longrightarrow \lim\limits_{t\rightarrow \infty}sup N(t)\leq \frac{\Pi}{\mu}
	\end{align*}
	Hence the system \eqref{EQ:eqn 2.1} is bounded.
\end{proof}
\subsection*{Local stability of disease-free equilibrium (DFE)}
The DFE of the model \eqref{EQ:eqn 2.1} is given by
\begin{align*}
\varepsilon_0&=(S^0, E^0, I_S^0, I_A^0, H^0, R^0)\\
&=\Big(\frac{\Pi }{\mu}, 0, 0, 0, 0, 0, 0\Big)
\end{align*}
The local stability of $\varepsilon_0$ can be established on the system \eqref{EQ:eqn 2.1} by using the next generation operator method. Using the notation in \cite{van2002reproduction}, the matrices $F$ for the new infection and $V$ for the transition terms are given, respectively, by
\begin{align*}
F&=\begin{bmatrix}
0 & \beta_1 &  \beta_1 & \rho \beta_2 \\
0&0 & 0 &0  \\
0 &0 & 0 &0  \\
0 &0&0 &0  \\
\end{bmatrix},\\
V&=\begin{bmatrix}
\mu+\sigma &0 & 0 &0 \\
-\kappa \sigma &\tau+\mu+\gamma_1 & 0 &0 \\
-(1-\kappa) \sigma &0 & \mu + \gamma_2 &0 \\
0 &-\tau & 0  &\delta+\mu+\gamma_3\\
\end{bmatrix}.
\end{align*}
It follows that the basic reproduction number \cite{hethcote2000mathematics}, denoted by $R_0=\rho(FV^{-1})$, where $\rho$ is the spectral radius, is given by
\begin{align*}
R_0=\frac{\beta_1 \kappa \sigma}{(\mu + \sigma)(\gamma_1 + \mu + \tau)}+\frac{\beta_1 (1-\kappa) \sigma}{(\mu+\gamma_2)(\mu + \sigma)} +\frac{\beta_2 \kappa \rho \sigma \tau}{(\mu + \sigma)(\delta + \gamma_3 + \mu)(\gamma_1 + \mu + \tau)}
\end{align*}
Using Theorem 2 in \cite{van2002reproduction}, the following result is established.
\begin{lemma}	\label{EQ:eqn L.1}
	The DFE, $\varepsilon_0$, of the model \eqref{EQ:eqn 2.1} is locally-asymptotically stable (LAS) if $R_0<1$, and unstable if $R_0>1$.
\end{lemma}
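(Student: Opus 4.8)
The plan is to apply Theorem~2 of van den Driessche and Watmough~\cite{van2002reproduction}, whose hypotheses are tailored to exactly this situation. First I would designate the infected compartments as $(E, I_S, I_A, H)$ and the remaining compartments $(S, R)$ as uninfected, observe that the disease-free set $\{E = I_S = I_A = H = 0\}$ is invariant for the system~\eqref{EQ:eqn 2.1}, and check that $\varepsilon_0$ is the unique equilibrium on it (the $S$-equation reduces to $\Pi - \mu S = 0$, giving $S^0 = \Pi/\mu$, and then $R^0 = 0$). I would then write the infected-compartment equations in the form $\dot{x} = \mathcal{F}(x) - \mathcal{V}(x)$, where $\mathcal{F}$ collects only the new-infection terms, namely the single incidence term $(\beta_1 S I_S + \beta_1 S I_A + \beta_2 \rho S H)/N$ entering $\dot E$, and $\mathcal{V}$ collects every remaining transfer term (including the loss $-(\mu+\sigma)E$ from $E$).

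Next I would verify the structural assumptions (A1)--(A5) of~\cite{van2002reproduction}: nonnegativity of $\mathcal{F}$ and of the in-flow parts of $\mathcal{V}$ on the nonnegative orthant; the absence of transfer into an empty compartment; the vanishing of $\mathcal{F}$ on the disease-free subspace; and the stability of the matrix governing the linearized disease-free dynamics restricted to the uninfected compartments $(S,R)$, which here is essentially $\mathrm{diag}(-\mu,-\mu)$ and hence trivially stable. Linearizing $\mathcal{F}$ and $\mathcal{V}$ at $\varepsilon_0$, and using $S^0 = N^0 = \Pi/\mu$ so that $S^0/N^0 = 1$, reproduces precisely the matrices $F$ and $V$ displayed above. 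Since $V$ is lower triangular with positive diagonal entries $\mu+\sigma$, $\tau+\mu+\gamma_1$, $\mu+\gamma_2$, $\delta+\mu+\gamma_3$ and nonpositive off-diagonal entries, it is a nonsingular M-matrix, so $V^{-1} \ge 0$ and $FV^{-1} \ge 0$.

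Then I would compute $R_0 = \rho(FV^{-1})$. Because $F$ has only one nonzero row, $FV^{-1}$ has rank at most one, so its spectral radius equals its trace, i.e. the entry $(FV^{-1})_{11}$, which is the inner product of the first row of $F$ with the first column of $V^{-1}$; solving $V y = e_1$ by forward substitution and taking that inner product yields exactly the three-term expression for $R_0$ stated in the main text. Finally, Theorem~2 of~\cite{van2002reproduction} concludes immediately that $\varepsilon_0$ is locally asymptotically stable when $\rho(FV^{-1}) < 1$ and unstable when $\rho(FV^{-1}) > 1$, which is the assertion of the lemma.

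The only mildly delicate point — the ``main obstacle,'' such as it is — is the bookkeeping in splitting the $\dot E$ equation into $\mathcal{F}$ versus $\mathcal{V}$ (new infections taken strictly as the incidence terms, all loss terms pushed into $\mathcal{V}$) and confirming that this particular decomposition satisfies the van den Driessche--Watmough axioms; once that splitting is pinned down, the remaining eigenvalue computation is routine precisely because $F$ is rank one and $V$ is triangular.
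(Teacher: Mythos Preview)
Your proposal is correct and follows exactly the paper's approach: the paper simply sets up the matrices $F$ and $V$, defines $R_0=\rho(FV^{-1})$, and invokes Theorem~2 of van den Driessche and Watmough~\cite{van2002reproduction} without further argument. Your write-up actually supplies more detail (verification of (A1)--(A5), the rank-one trick for computing the spectral radius) than the paper itself provides.
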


\subsection*{Global stability of DFE}
To show the global stability of $\varepsilon_0$, of the model \eqref{EQ:eqn 2.1}, we rewrite the system~\eqref{EQ:eqn 2.1} as follows 

\begin{eqnarray}\label{EQ:eqn 2.3}
\displaystyle{ \frac{dX}{dt} } &=&  T(X,I)\\
\displaystyle{ \frac{dI}{dt} } &=& G(X,I), \hspace{0.3cm} G(X,0) = 0,\nonumber
\end{eqnarray}

where, $X = (S,R) \in \mathbb{R}_+^{2}$ denotes (it components) the number of uninfected individuals and $I = (E, I_{S}, I_{A}, H) \in \mathbb{R}_+^{4}$ denotes (it components) the number of infected individuals including latent, infectious, etc. $\varepsilon_0 =(X^{*}, 0)$ denotes the disease-free equilibrium of the system~(\ref{EQ:eqn 2.3}). For the system~(\ref{EQ:eqn 2.1}), $T(X,I)$ and $G(X,I)$ are given as follows: 

\begin{align*}
T (X,I) &=\begin{bmatrix}
\Pi-\frac{\beta_1 S I_S}{N}-\frac{\beta_1 S I_A}{N}-\frac{\beta_2  (\rho S) H}{N}-\mu S \\\\
\gamma_1 I_S+\gamma_2 I_A+\gamma_3 H-\mu R\\
\end{bmatrix},\\\\
G(X,I)&=\begin{bmatrix}
\frac{\beta_1 S I_S}{N}+\frac{\beta_1 S I_A}{N}+\frac{\beta_2  (\rho S) H}{N}-(\mu+\sigma)E\\\\
\kappa \sigma E-(\tau+\mu+\gamma_1)I_S\\\\
(1-\kappa)\sigma E-(\mu+\gamma_2)I_A\\\\
\tau I_S-(\gamma_3+\delta+\mu)H\\
\end{bmatrix}.
\end{align*} It is clear from the expression of $G(X,I)$ that $G(X,0) =0 $. 

To show global stability of $\varepsilon_0 =(X^{*}, 0)$ following two condition must hold:

\textbf{(H1)} For $\displaystyle{ \frac{dX}{dt} } =  T(X,0)$, $X^{*}$ is globally asymptotically stable.\\ 

\textbf{(H2)} $G(X,I) = A I - \hat{G}(X,I)$, $\hat{G}(X,I) \geq 0$ for $(X,I) \in \Omega$,\\

where $A = D_{I}G(X^{*},0)$ is an M-matrix (the off diagonal elements are nonnegative) and $\Omega$ is the region where model~(\ref{EQ:eqn 2.1}) makes biological sense. 

Now, the system defined in \textbf{(H1)} can be written as \begin{eqnarray}
\frac{dS}{dt} & =\Pi -\mu S,\\
\frac{dR}{dt} & = -\mu R.\nonumber
\label{EQ:eqn A.4}
\end{eqnarray}
Solving analytically this system of equation we get, $S(t)=\frac{\Pi}{\mu}+e^{-\mu t}(S(0)-\frac{\Pi}{\mu})$, $R(t)=e^{-\mu t}R(0)$. As $t \rightarrow \infty$, $S(t)=\frac{\Pi}{\mu},~ R(t) \rightarrow 0$. Therefore, $X^{*}$ is globally asymptotically stable for $\displaystyle{ \frac{dX}{dt} } =  T(X,0)$. 

So \textbf{(H1)} holds for the system~(\ref{EQ:eqn 2.1}).
Now matrices $A$ and $\hat{G}(X,I)$ for the system~(\ref{EQ:eqn 2.1}) are given as follows:  

\begin{align*}
A &=\begin{bmatrix}
-(\mu + \sigma) & \beta_{1} & \beta_{1} & \rho \beta_{1} \\\\
\kappa \sigma & - (\gamma_1 + \tau + \mu)& 0 &0 \\\\
(1-\kappa) \sigma & 0 & - (\gamma_2 + \mu) &0 \\\\
0 & \tau & 0 & - (\gamma_3 + \delta + \mu)\\
\end{bmatrix},\\\\
\hat{G}(X,I) &=\begin{bmatrix}
\beta_{1} I_{S} (1- \frac{S}{N}) + \beta_{1} I_{A} (1- \frac{S}{N}) + \rho \beta_{2} H (1- \frac{S}{N})\\\\
0\\\\
0 \\\\
0\\
\end{bmatrix}.
\end{align*}

Clearly, $A$ is an M-matrix and as $S(t) \leq N (t)$ in $\Omega$, therefore, $\hat{G}(X,I) \geq 0$ for $(X,I) \in \Omega$. Following \cite{castillo2002computation}, the below result can be stated:

\begin{theorem}
	The DFE of the model \eqref{EQ:eqn 2.1} is globally asymptotically stable in $\Omega$ whenever $R_0 < 1$.
\end{theorem}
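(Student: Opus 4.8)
The plan is to invoke the global-stability criterion of Castillo-Chavez, Feng and Huang, whose two hypotheses (H1) and (H2) have already been set up in the excerpt above; what remains is to check them carefully and to connect the M-matrix condition to the threshold $R_0<1$. First I would confirm the decomposition $\frac{dX}{dt}=T(X,I)$, $\frac{dI}{dt}=G(X,I)$ with $X=(S,R)$ the uninfected compartments and $I=(E,I_S,I_A,H)$ the infected ones, observe that $G(X,0)=0$, and identify the disease-free equilibrium $\varepsilon_0=(X^{*},0)$ with $X^{*}=(\Pi/\mu,0)$.

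For (H1), the subsystem $\frac{dX}{dt}=T(X,0)$ is the decoupled linear pair $\dot S=\Pi-\mu S$, $\dot R=-\mu R$; solving explicitly gives $S(t)\to\Pi/\mu$ and $R(t)\to 0$ as $t\to\infty$ for every nonnegative initial datum, so $X^{*}$ is globally asymptotically stable and (H1) holds. This step is routine.

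For (H2), I would write $G(X,I)=AI-\hat G(X,I)$ with $A=D_IG(X^{*},0)$. Evaluating the Jacobian of $G$ at the DFE, where $S/N=1$, reproduces exactly the matrix $A$ displayed above, whose off-diagonal entries $\kappa\sigma$, $(1-\kappa)\sigma$, $\beta_1$, $\rho\beta_2$, $\tau$ are all nonnegative, so $A$ is an M-matrix in the required sense. The remainder $\hat G(X,I)$ has a single nonzero component, proportional to $\bigl(1-\tfrac{S}{N}\bigr)\bigl(\beta_1 I_S+\beta_1 I_A+\rho\beta_2 H\bigr)$, and since $0\le S\le N$ throughout the feasible region $\Omega$ (by the boundedness lemma $N\le\Pi/\mu$ and all compartments stay nonnegative), one gets $\hat G(X,I)\ge 0$ on $\Omega$. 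Hence (H2) holds.

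Finally, the matrix $A$ coincides with $F-V$ from the next-generation construction carried out earlier, so by the van den Driessche and Watmough characterization its stability modulus is negative precisely when $R_0=\rho(FV^{-1})<1$. Therefore, when $R_0<1$, both hypotheses of the Castillo-Chavez, Feng and Huang theorem are satisfied, and $\varepsilon_0$ is globally asymptotically stable in $\Omega$. I expect the only delicate point to be the verification that $\hat G\ge 0$: it relies on the trajectory remaining inside $\Omega$ so that $S\le N$, together with the nonnegativity of the transmission terms, which is exactly why the positivity and boundedness results established earlier are needed as prerequisites.
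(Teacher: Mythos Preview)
Your proposal is correct and follows essentially the same route as the paper: the Castillo--Chavez--Feng--Huang criterion with the decomposition $X=(S,R)$, $I=(E,I_S,I_A,H)$, the explicit solution of the linear $(S,R)$-subsystem for (H1), and the verification that $\hat G\ge 0$ via $S\le N$ on $\Omega$ for (H2). Your explicit remark that $A=F-V$ so that $s(A)<0$ iff $R_0<1$ is a useful clarification that the paper leaves implicit (it simply invokes the cited result together with the already established local stability), and you correctly write the $(1,4)$ entry of $A$ as $\rho\beta_2$ where the paper's displayed matrix has a typo $\rho\beta_1$.
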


\subsection*{Existence of endemic equilibria}

In this section, the existence of the endemic equilibrium of the model \eqref{EQ:eqn 2.1} is established. Let us denote
\begin{align*}
k_1 & =\mu+\sigma, k_2=\gamma_1+\mu+\tau, k_3=\mu+\gamma_2, k_4=\delta+\gamma_3+\mu.
\end{align*}
Let $\varepsilon^*=(S^*, L^*, E^*, I_S^*, I_A^*, H^*, R^*)$ represents any arbitrary endemic equilibrium point (EEP) of the model \eqref{EQ:eqn 2.1}. Further, define
\begin{align}\label{EQ:eqn A.2}
\lambda^*=\frac{\beta_1 I_S^*}{N^*}+\frac{\beta_1  I_A^*}{N^*}+\frac{\rho \beta_2  H^*}{N^*}
\end{align}
It follows, by solving the equations in \eqref{EQ:eqn 2.1} at steady-state, that
\begin{align}\label{EQ:eqn A.3}
S^*&=\frac{\Pi}{\lambda^*+\mu}, E^*=\frac{\lambda^*S^*}{k_1},  I_S^*=\frac{\kappa\sigma \lambda^*S^*}{k_1k_2}, I_A^* =\frac{(1-\kappa) \sigma \lambda^*S^*}{k_1k_3}\\\nonumber
H^*& =\frac{\tau \kappa \sigma \lambda^*S^*}{k_1 k_2 k_4}, R^*=\frac{\gamma_1 \kappa\sigma \lambda^*S^*}{\mu k_1k_2}+ \frac{\gamma_2 (1-\kappa) \sigma \lambda^*S^*}{\mu k_1k_3}+ \frac{\gamma_3 \tau \kappa \sigma \lambda^*S^*}{\mu k_1 k_2 k_4} \nonumber
\end{align}
Substituting the expression in \eqref{EQ:eqn A.3} into \eqref{EQ:eqn A.2} shows that the non-zero equilibrium of the model \eqref{EQ:eqn 2.1} satisfy the following linear equation, in terms of $\lambda^*$:
\begin{align}
a_0 \lambda^* +a_1=0
\end{align}
where
\begin{align*}
a_0&= \mu k_2 k_3 k_4 + \kappa\sigma k_3 k_4 (\mu+\gamma_1)+ (1-\kappa)\sigma k_2 k_4 (\mu+\gamma_2) +\tau \kappa \sigma k_3(\mu+\gamma_3)\\
a_1&=\mu k_1 k_2 k_3 k_4(1-R_0)
\end{align*}
Since $a_0>0$, $\mu>0$, $k_1>0$, $k_2>0$, $k_3>0$ and $k_4>0$, it is clear that the model \eqref{EQ:eqn 2.1} has a unique endemic equilibrium point (EEP) whenever $R_0>1$ and no positive endemic equilibrium point whenever $R_0<1$. This rules out the possibility of the existence of equilibrium other than DFE whenever $R_0<1$. Furthermore, it can be shown that, the DFE $\varepsilon_0$ of the model \eqref{EQ:eqn 2.1} is globally asymptotically stable (GAS) whenever $R_0<1$.

From the above discussion we have concluded that
\begin{theorem}
	The model \eqref{EQ:eqn 2.1} has a unique endemic (positive) equilibrium, given by $\varepsilon^*$, whenever $R_0>1$ and has no endemic equilibrium for $R_0\leq 1$.
\end{theorem}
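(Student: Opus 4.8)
The plan is to reduce the search for an endemic equilibrium of \eqref{EQ:eqn 2.1} to solving a single scalar equation in the equilibrium force of infection $\lambda^*$ defined in \eqref{EQ:eqn A.2}, and then to read off existence and uniqueness from the sign of its coefficients. First I would set the right-hand sides of \eqref{EQ:eqn 2.1} to zero and solve the resulting algebraic system recursively: from the $S$-equation, $S^*=\Pi/(\lambda^*+\mu)$; from the $E$-equation, $E^*=\lambda^*S^*/k_1$; and then $I_S^*$, $I_A^*$, $H^*$, $R^*$ in turn, obtaining the expressions in \eqref{EQ:eqn A.3}. The structural fact that makes the argument work is that each of these is a \emph{positive} constant multiple of the product $\lambda^*S^*$.

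Next I would substitute \eqref{EQ:eqn A.3} back into the definition \eqref{EQ:eqn A.2}. The numerator $\beta_1 I_S^*+\beta_1 I_A^*+\rho\beta_2 H^*$ becomes $\lambda^*S^*$ times exactly the bracketed constant that defines $R_0$, while the denominator $N^*=S^*+E^*+I_S^*+I_A^*+H^*+R^*$ becomes $S^*$ times an affine function $1+c\lambda^*$ of $\lambda^*$ with $c>0$. The common factor $S^*$ cancels, and (for $\lambda^*\neq 0$) dividing through once more by $\lambda^*$ leaves the affine identity $1+c\lambda^*=R_0$, equivalently the linear equation $a_0\lambda^*+a_1=0$ with $a_0>0$ and $a_1=\mu k_1k_2k_3k_4(1-R_0)$, as recorded in the excerpt.

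Finally I would carry out the sign analysis. If $R_0>1$ then $a_1<0$, so $\lambda^*=-a_1/a_0>0$ is the unique positive root; substituting it into \eqref{EQ:eqn A.3} yields a unique equilibrium $\varepsilon^*$ with all components strictly positive and $S^*<\Pi/\mu$, consistent with the invariant region $\Omega$. If $R_0\le 1$ then $a_1\ge 0$, so every root of $a_0\lambda^*+a_1=0$ satisfies $\lambda^*\le 0$; since a biologically meaningful equilibrium needs $\lambda^*\ge 0$ and $\lambda^*=0$ forces $E^*=I_S^*=I_A^*=H^*=0$ (the DFE), no endemic equilibrium exists. This also re-establishes that the DFE is the only equilibrium when $R_0\le 1$, which is the hook used for the global stability claim above.

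I expect the only delicate point to be the bookkeeping around $N^*$: one must keep $N^*$ symbolic, express it through \eqref{EQ:eqn A.3}, and verify that the $S^*$ factor genuinely cancels so that the consistency condition collapses to a single \emph{linear} equation in $\lambda^*$ rather than a higher-degree polynomial. Once that cancellation is secured, the remaining computation of $a_0$ and $a_1$ and the sign-chasing are routine.
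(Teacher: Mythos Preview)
Your proposal is correct and follows essentially the same approach as the paper: express all steady-state components via \eqref{EQ:eqn A.3} in terms of $\lambda^*$, substitute into \eqref{EQ:eqn A.2}, cancel the common $S^*$ (and $\lambda^*$) factor, and read off existence and uniqueness from the signs of $a_0>0$ and $a_1=\mu k_1k_2k_3k_4(1-R_0)$. Your remark on the $N^*$ bookkeeping is exactly the one nontrivial algebraic step the paper leaves implicit.
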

\clearpage

\begin{center}
	\section*{\Large{\underline{Figures}}}
\end{center}

\begin{figure}[ht]
	\captionsetup{width=1.1\textwidth}
	%\begin{center}
	{\includegraphics[width=1.15\textwidth]{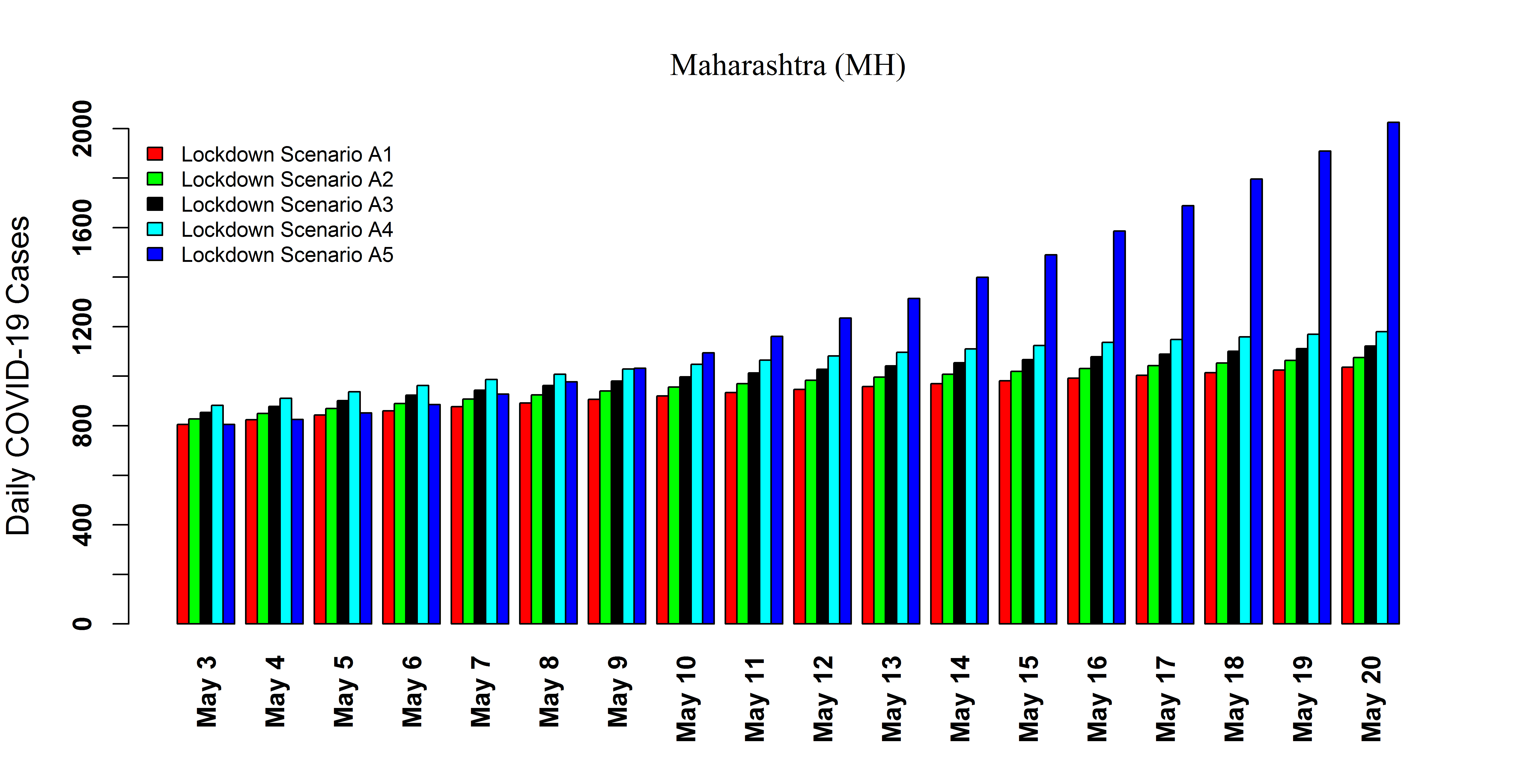}}
	%\end{center}
	\caption{Ensemble model~(see main text) forecast for the daily notified COVID-19 cases in Maharashtra during May 3, 2020 till May 20, 2020 under five different lockdown scenarios. Under lockdown scenarios (\textbf{A2}) to (\textbf{A4}), projections are made with 10\%, 20\% and 30\% \textbf{decrement} in the current estimate of the lockdown rate, respectively (see Table~2 in main text). In lockdown scenario (\textbf{A1}), projections are made with current estimated lockdown rate (see Table~2 in main text). Finally, in the lockdown scenario (\textbf{A5}), projections are made with no lockdown during May 3, 2020 till May 20, 2020.}  
	\label{Fig:Forecast-MH}
\end{figure}

\begin{figure}[ht]
	\captionsetup{width=1.1\textwidth}
	%\begin{center}
	{\includegraphics[width=1.15\textwidth]{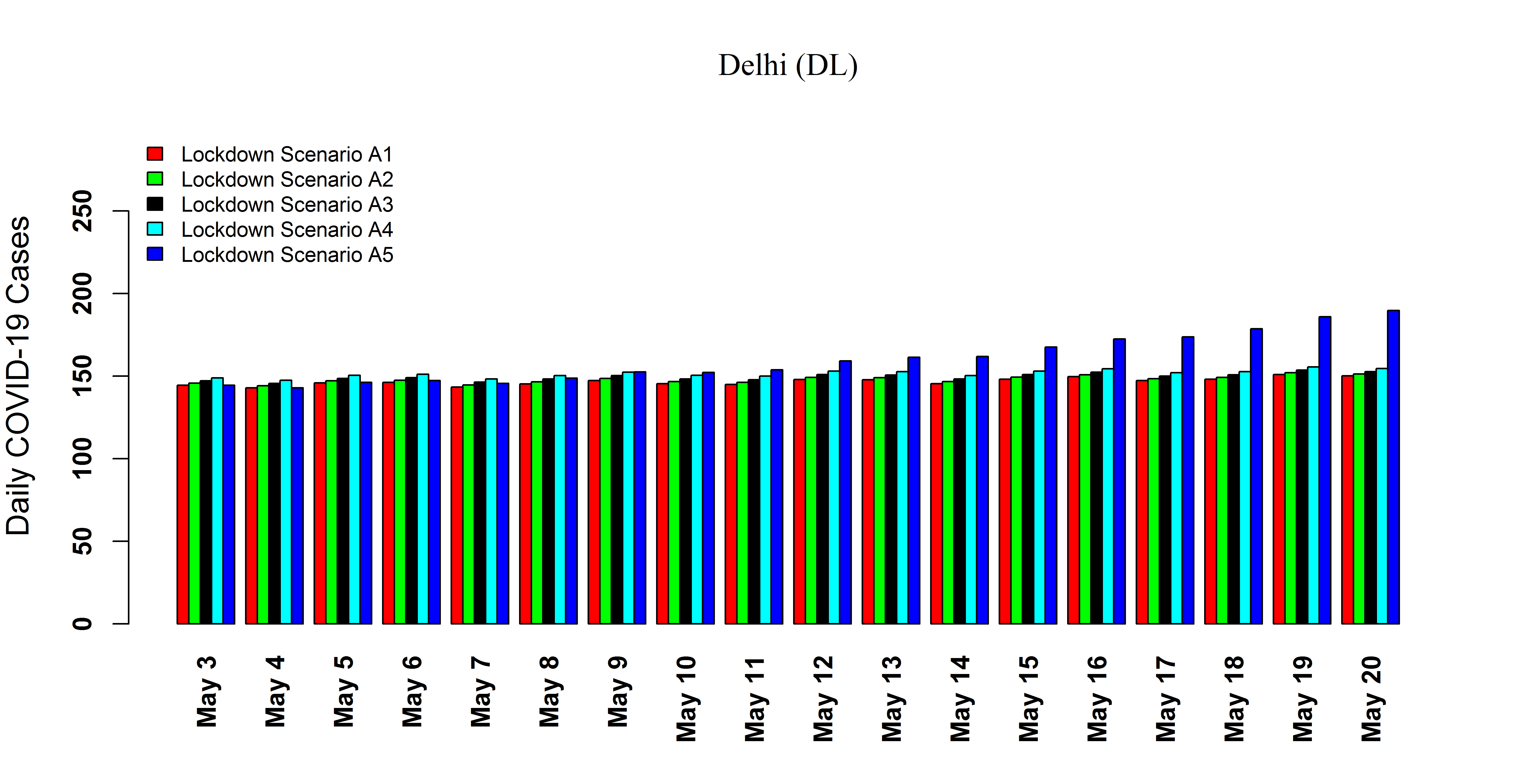}}
	%\end{center}
	\caption{Ensemble model~(see main text) forecast for the daily notified COVID-19 cases in Delhi during May 3, 2020 till May 20, 2020 under five different lockdown scenarios. Lockdown Scenarios (\textbf{A1}) to~(\textbf{A5}) are same as Figure~\ref{Fig:Forecast-MH}.}     
	\label{Fig:Forecast-DL}
\end{figure}

\begin{figure}[ht]
	\captionsetup{width=1.1\textwidth}
	%\begin{center}
	{\includegraphics[width=1.15\textwidth]{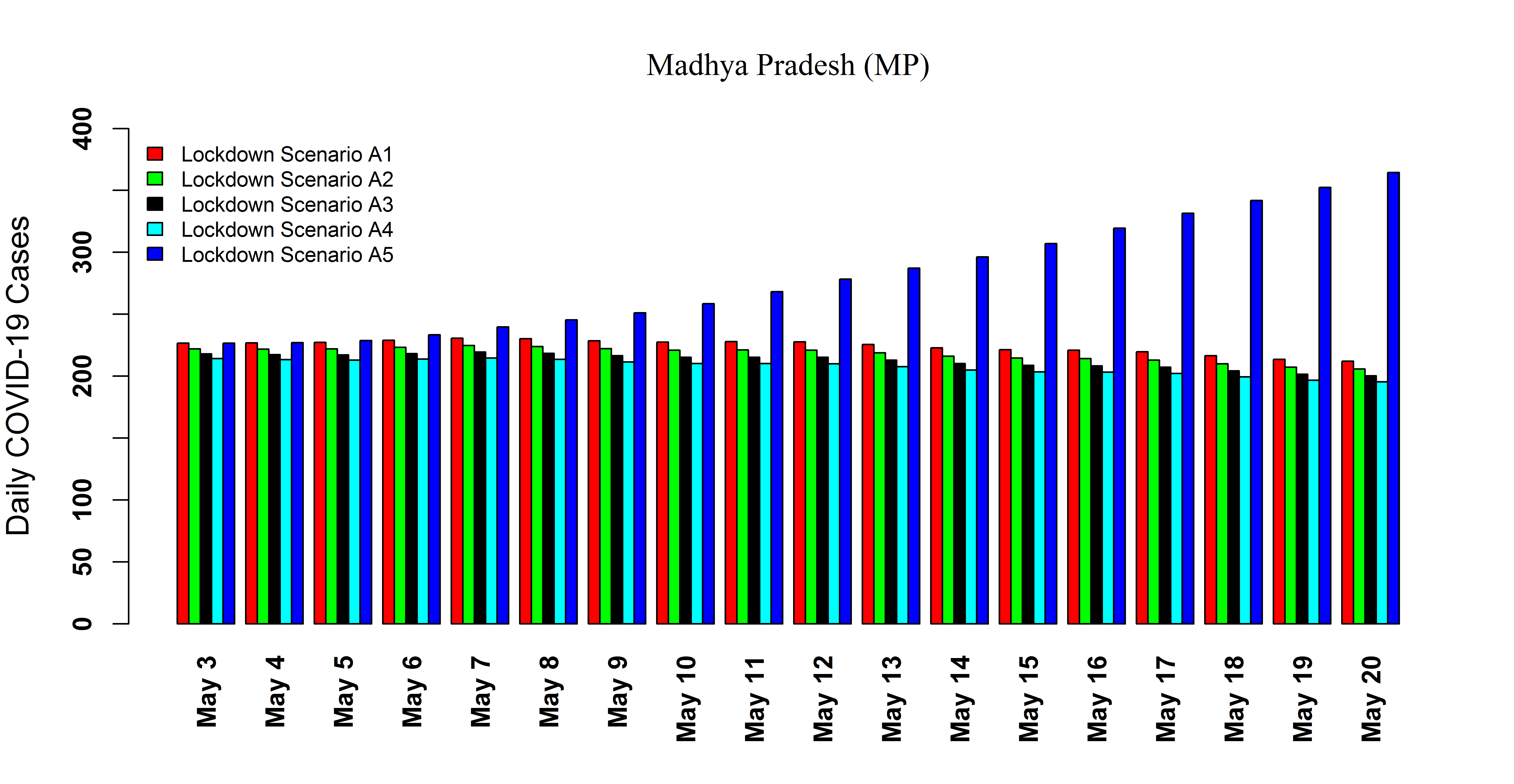}}
	%\end{center}
	\caption{Ensemble model~(see main text) forecast for the daily notified COVID-19 cases in Madhya Pradesh during May 3, 2020 till May 20, 2020 under five different lockdown scenarios. Lockdown Scenarios (\textbf{A1}) and~(\textbf{A5}) are same as Figure~\ref{Fig:Forecast-MH}. However, under lockdown scenarios (\textbf{A2}) to (\textbf{A4}), projections are made with 10\%, 20\% and 30\% \textbf{increment} in the current estimate of the lockdown rate, respectively (see Table~2 in main text).}     
	\label{Fig:Forecast-MP}
\end{figure}

\begin{figure}[ht]
	\captionsetup{width=1.1\textwidth}
	%\begin{center}
	{\includegraphics[width=1.15\textwidth]{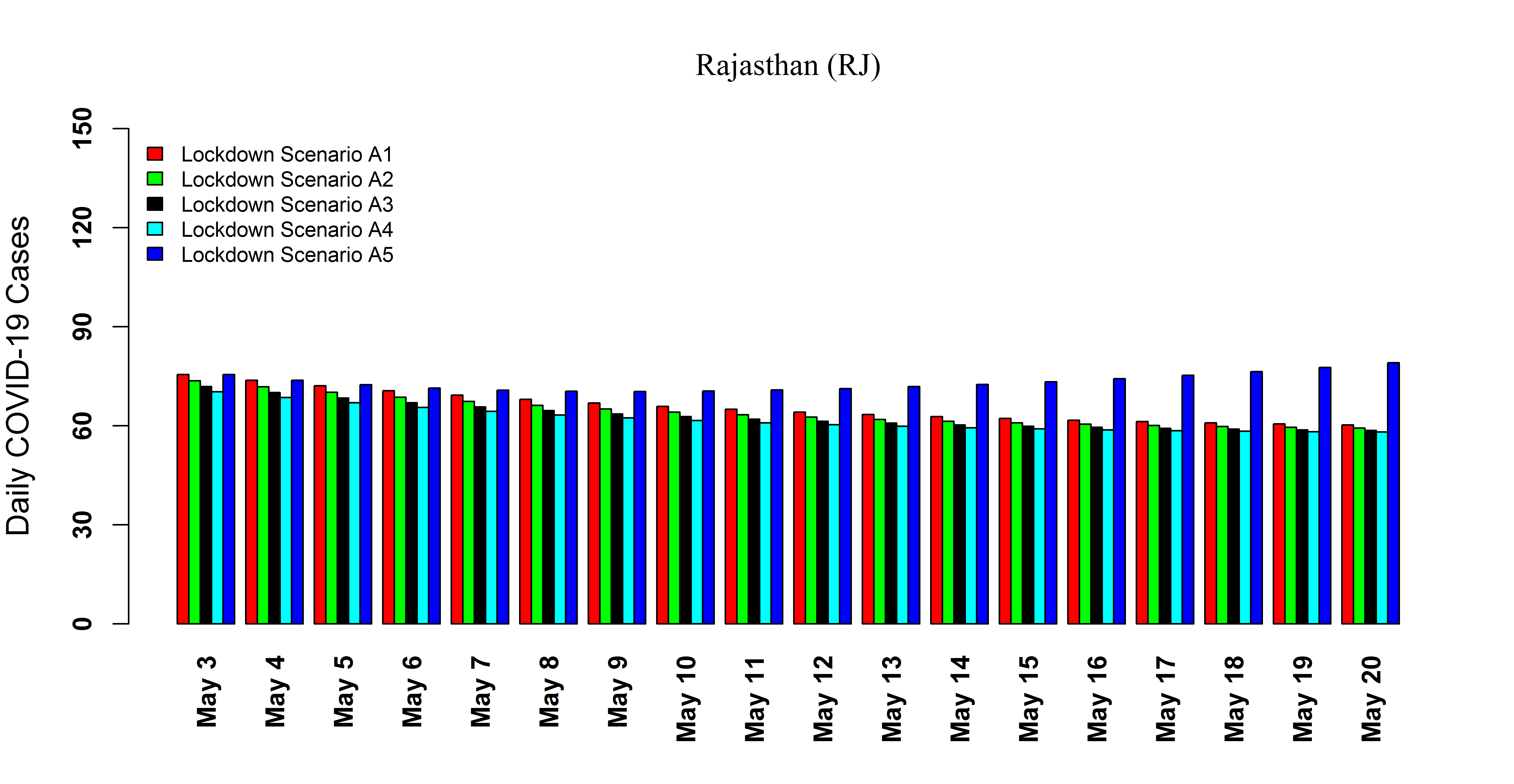}}
	%\end{center}
	\caption{Ensemble model~(see main text) forecast for the daily notified COVID-19 cases in Rajasthan during May 3, 2020 till May 20, 2020 under five different lockdown scenarios. Lockdown Scenarios (\textbf{A1}) and (\textbf{A5}) are same as Figure~\ref{Fig:Forecast-MH}. However, under lockdown scenarios (\textbf{A2}) to (\textbf{A4}), projections are made with 10\%, 20\% and 30\% \textbf{increment} in the current estimate of the lockdown rate, respectively (see Table~2 in main text).}     
	\label{Fig:Forecast-RJ}
\end{figure}

\begin{figure}[ht]
	\captionsetup{width=1.1\textwidth}
	%\begin{center}
	{\includegraphics[width=1.15\textwidth]{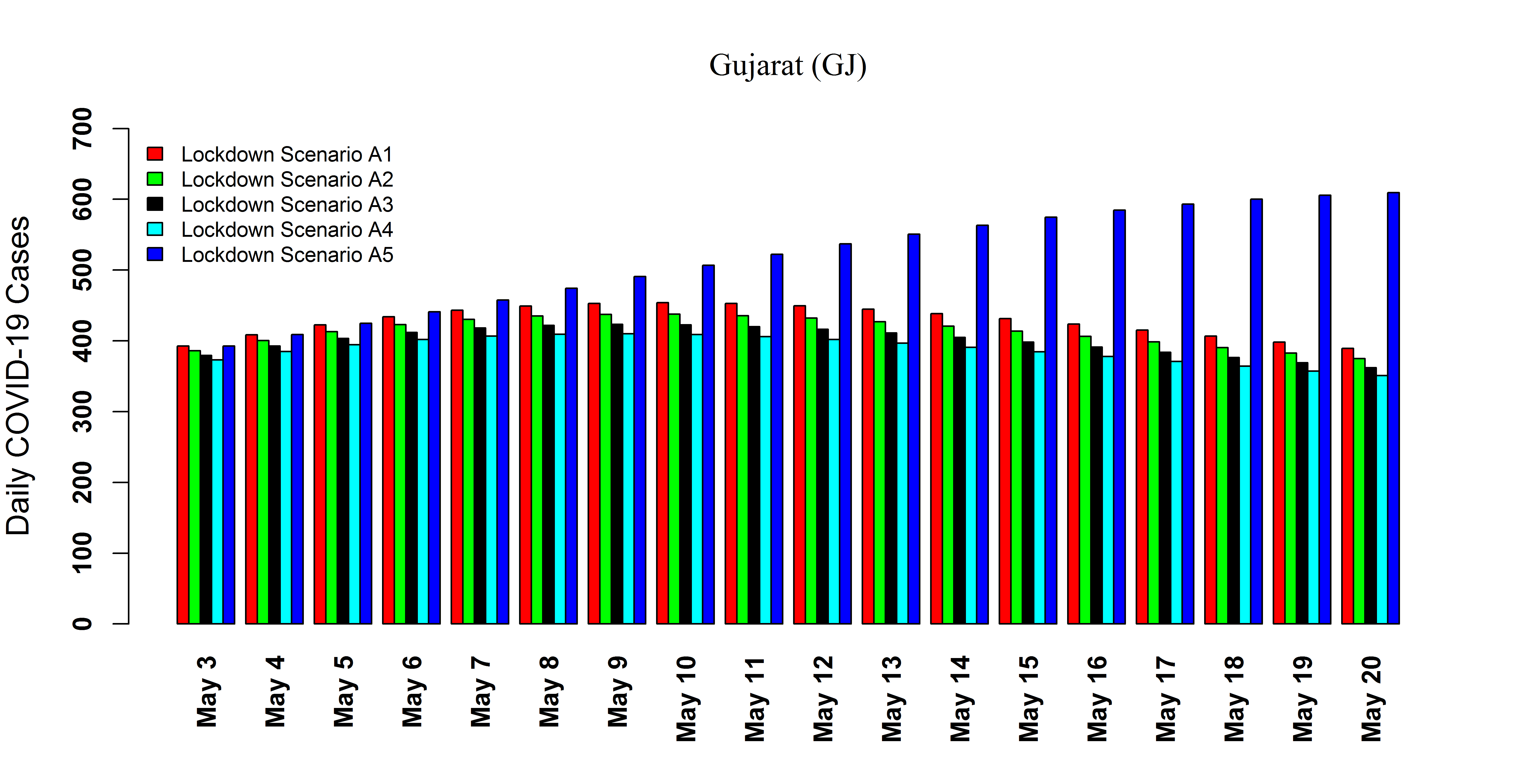}}
	%\end{center}
	\caption{Ensemble model~(see main text) forecast for the daily notified COVID-19 cases in Gujarat during May 3, 2020 till May 20, 2020 under five different lockdown scenarios. Lockdown Scenarios (\textbf{A1}) and (\textbf{A5}) are same as Figure~\ref{Fig:Forecast-MH}. However, under lockdown scenarios (\textbf{A2}) to (\textbf{A4}), projections are made with 10\%, 20\% and 30\% \textbf{increment} in the current estimate of the lockdown rate, respectively (see Table~2 in main text).}     
	\label{Fig:Forecast-GJ}
\end{figure}

\begin{figure}[ht]
	\captionsetup{width=1.1\textwidth}
	%\begin{center}
	{\includegraphics[width=1.15\textwidth]{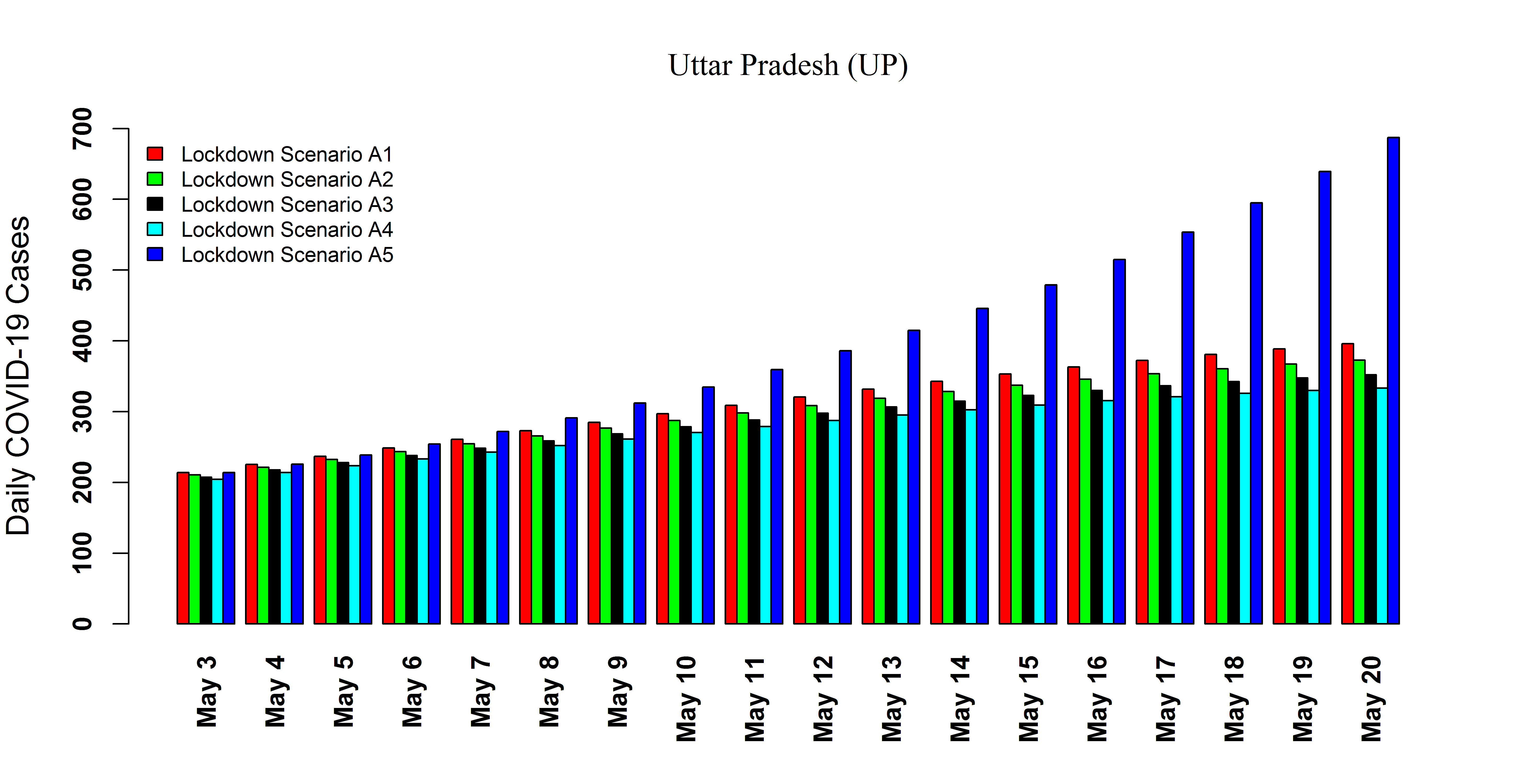}}
	%\end{center}
	\caption{Ensemble model~(see main text) forecast for the daily notified COVID-19 cases in Uttar Pradesh during May 3, 2020 till May 20, 2020 under five different lockdown scenarios. Lockdown Scenarios (\textbf{A1}) and (\textbf{A5}) are same as Figure~\ref{Fig:Forecast-MH}. However, under lockdown scenarios (\textbf{A2}) to (\textbf{A4}), projections are made with 10\%, 20\% and 30\% \textbf{increment} in the current estimate of the lockdown rate, respectively (see Table~2 in main text).}     
	\label{Fig:Forecast-UP}
\end{figure}

\begin{figure}[ht]
	\captionsetup{width=1.1\textwidth}
	%\begin{center}
	{\includegraphics[width=1.15\textwidth]{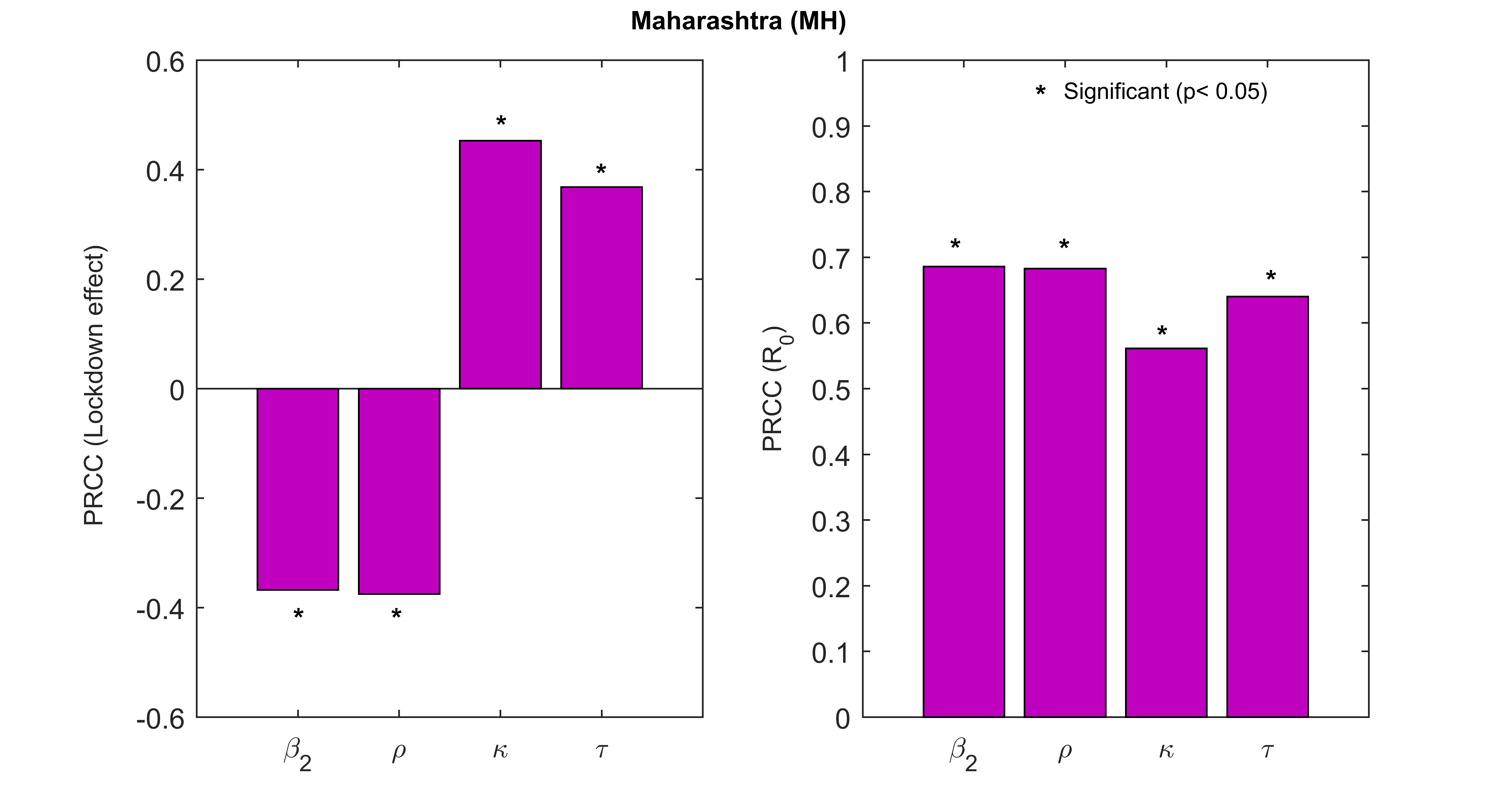}}
	%\end{center}
	\caption{Effect of uncertainty of four epidemiologically measurable and controllable parameters of the mechanistic ODE model~(see Table~1 and Fig~1 in the main text) on the effect of lockdown and the basic reproduction number ($R_{0}$). Lockdown effect is measured in terms of the differences in total number of COVID-19 cases occurred during May 3, 2020 till May 20, 2020 in \textbf{Maharashtra} under the lockdown scenarios (\textbf{A1}) and (\textbf{A5}), respectively~(see main text). Effect of Uncertainty of these four parameters on the two mentioned responses are measured using Partial Rank Correlation Coefficients (PRCC). $500$~samples for each parameters were drawn using Latin hypercube sampling techniques (LHS) from their respective ranges provided in Table~1 (main text).}     
	\label{Fig:Sensitivity-MH}
\end{figure}

\begin{figure}[ht]
	\captionsetup{width=1.1\textwidth}
	%\begin{center}
	{\includegraphics[width=1.15\textwidth]{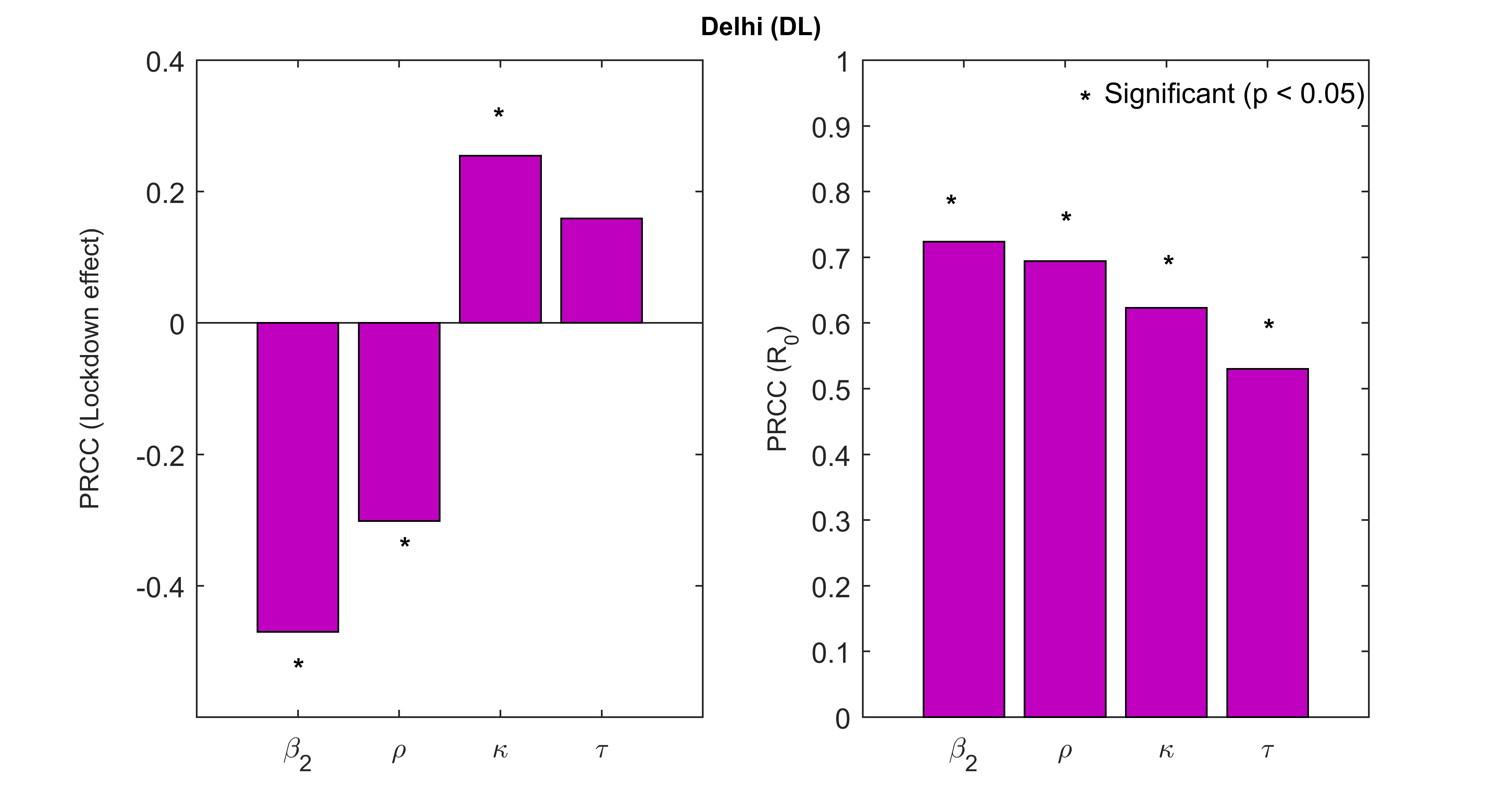}}
	%\end{center}
	\caption{Effect of uncertainty of four epidemiologically measurable and controllable parameters of the mechanistic ODE model~(see Table~1 and Fig~1 in the main text) on the effect of lockdown and the basic reproduction number ($R_{0}$). Lockdown effect is measured in terms of the differences in total number of COVID-19 cases occurred during May 3, 2020 till May 20, 2020 in \textbf{Delhi} under the lockdown scenarios (\textbf{A1}) and (\textbf{A5}), respectively~(see main text). Effect of Uncertainty of these four parameters on the two mentioned responses are measured using Partial Rank Correlation Coefficients (PRCC). $500$~samples for each parameters were drawn using Latin hypercube sampling techniques (LHS) from their respective ranges provided in Table~1 (main text).}     
	\label{Fig:Sensitivity-DL}
\end{figure}
\begin{figure}[ht]
	\captionsetup{width=1.2\textwidth}
	%\begin{center}
	{\includegraphics[width=1.15\textwidth]{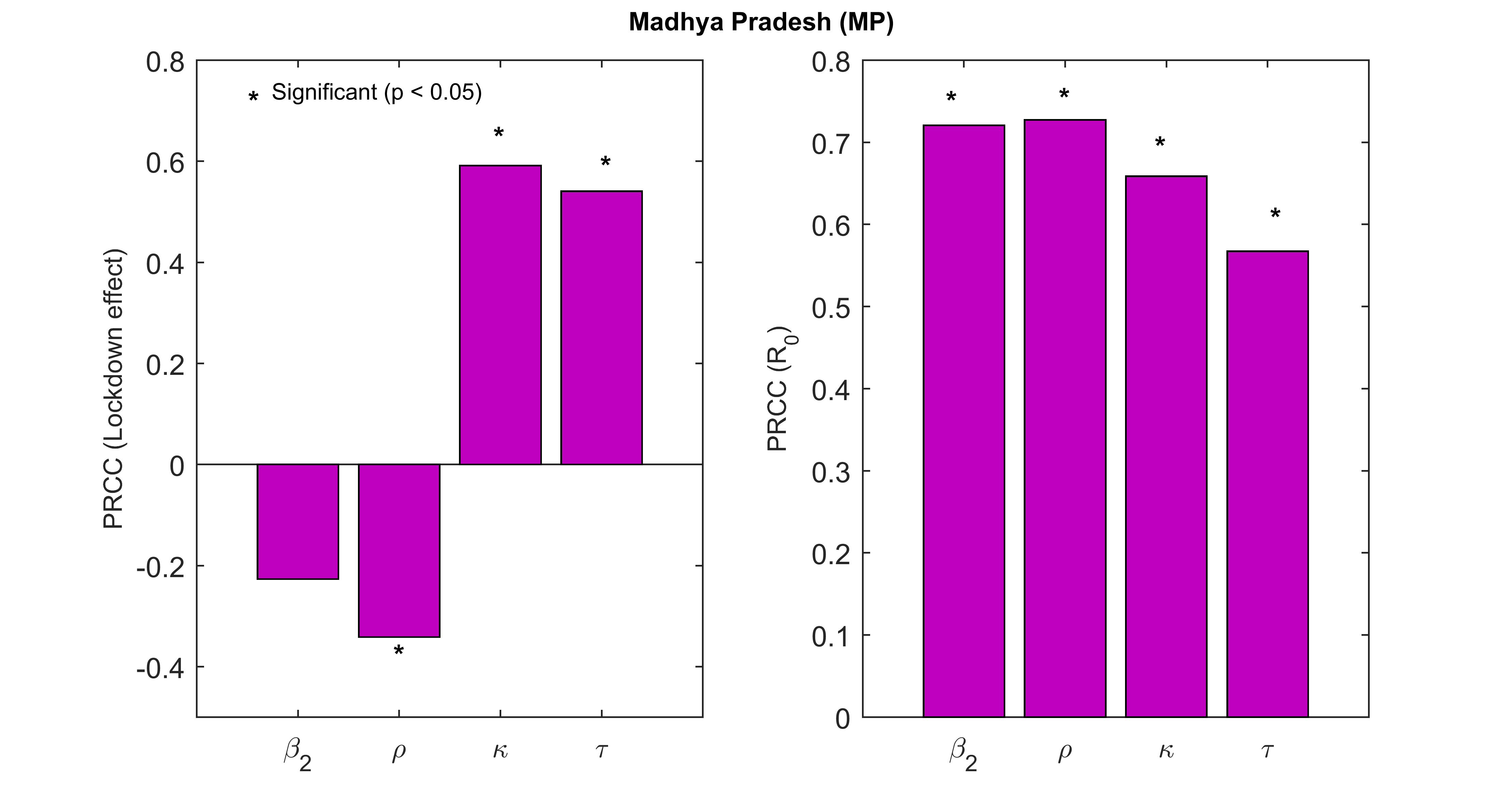}}
	%\end{center}
	\caption{Effect of uncertainty of four epidemiologically measurable and controllable parameters of the mechanistic ODE model~(see Table~1 and Fig~1 in the main text) on the effect of lockdown and the basic reproduction number ($R_{0}$). Lockdown effect is measured in terms of the differences in total number of COVID-19 cases occurred during May 3, 2020 till May 20, 2020 in \textbf{Madhya Pradesh} under the lockdown scenarios (\textbf{A1}) and (\textbf{A5}), respectively~(see main text). Effect of Uncertainty of these four parameters on the two mentioned responses are measured using Partial Rank Correlation Coefficients (PRCC). $500$~samples for each parameters were drawn using Latin hypercube sampling techniques (LHS) from their respective ranges provided in Table~1 (main text).}     
	\label{Fig:Sensitivity-MP}
\end{figure}

\begin{figure}[ht]
	\captionsetup{width=1.2\textwidth}
	%\begin{center}
	{\includegraphics[width=1.15\textwidth]{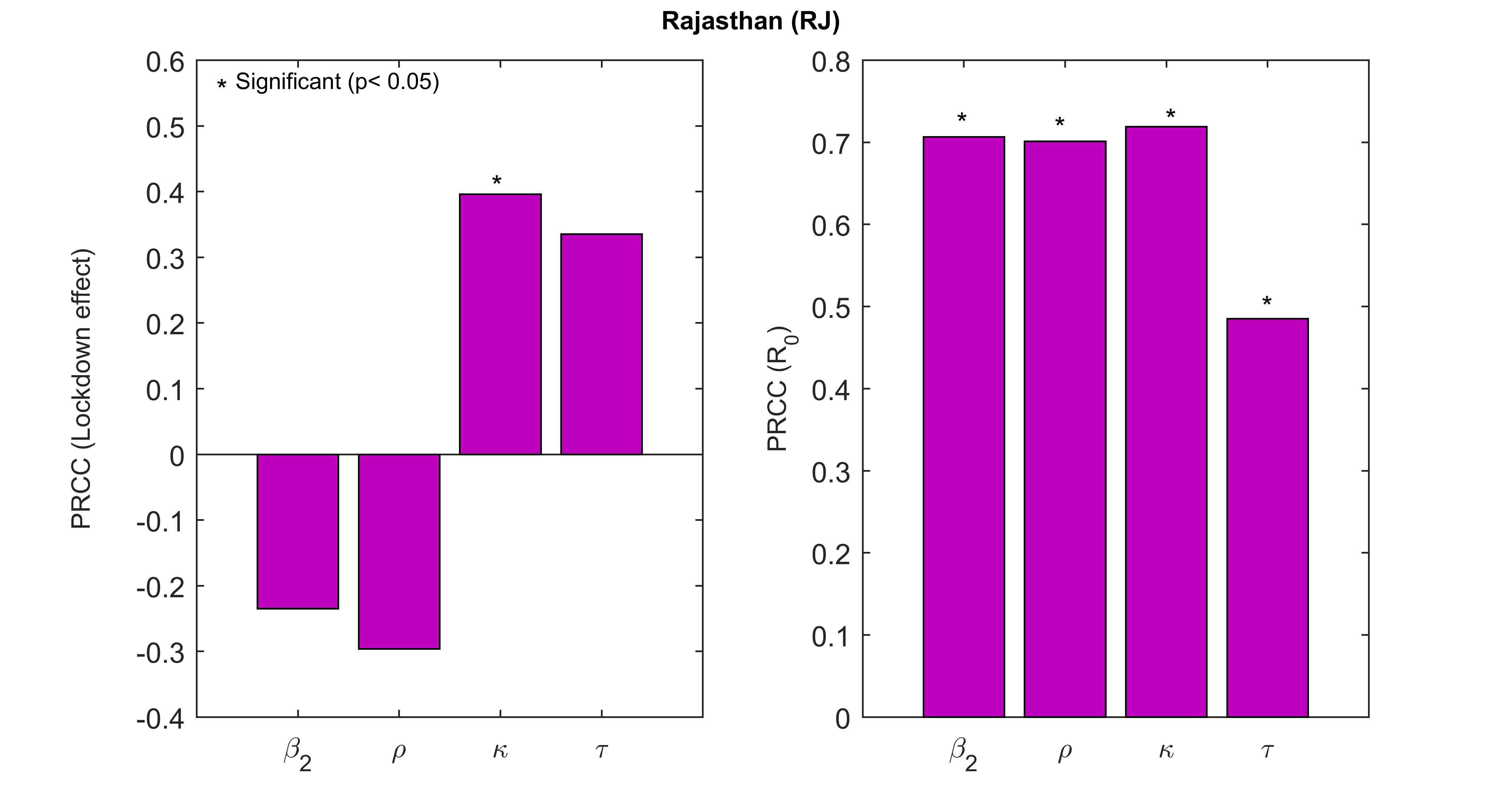}}
	%\end{center}
	\caption{Effect of uncertainty of four epidemiologically measurable and controllable parameters of the mechanistic ODE model~(see Table~1 and Fig~1 in the main text) on the effect of lockdown and the basic reproduction number ($R_{0}$). Lockdown effect is measured in terms of the differences in total number of COVID-19 cases occurred during May 3, 2020 till May 20, 2020 in \textbf{Rajasthan} under the lockdown scenarios (\textbf{A1}) and (\textbf{A5}), respectively~(see main text). Effect of Uncertainty of these four parameters on the two mentioned responses are measured using Partial Rank Correlation Coefficients (PRCC). $500$~samples for each parameters were drawn using Latin hypercube sampling techniques (LHS) from their respective ranges provided in Table~1 (main text).}     
	\label{Fig:Sensitivity-RJ}
\end{figure}

\begin{figure}[ht]
	\captionsetup{width=1.2\textwidth}
	%\begin{center}
	{\includegraphics[width=1.15\textwidth]{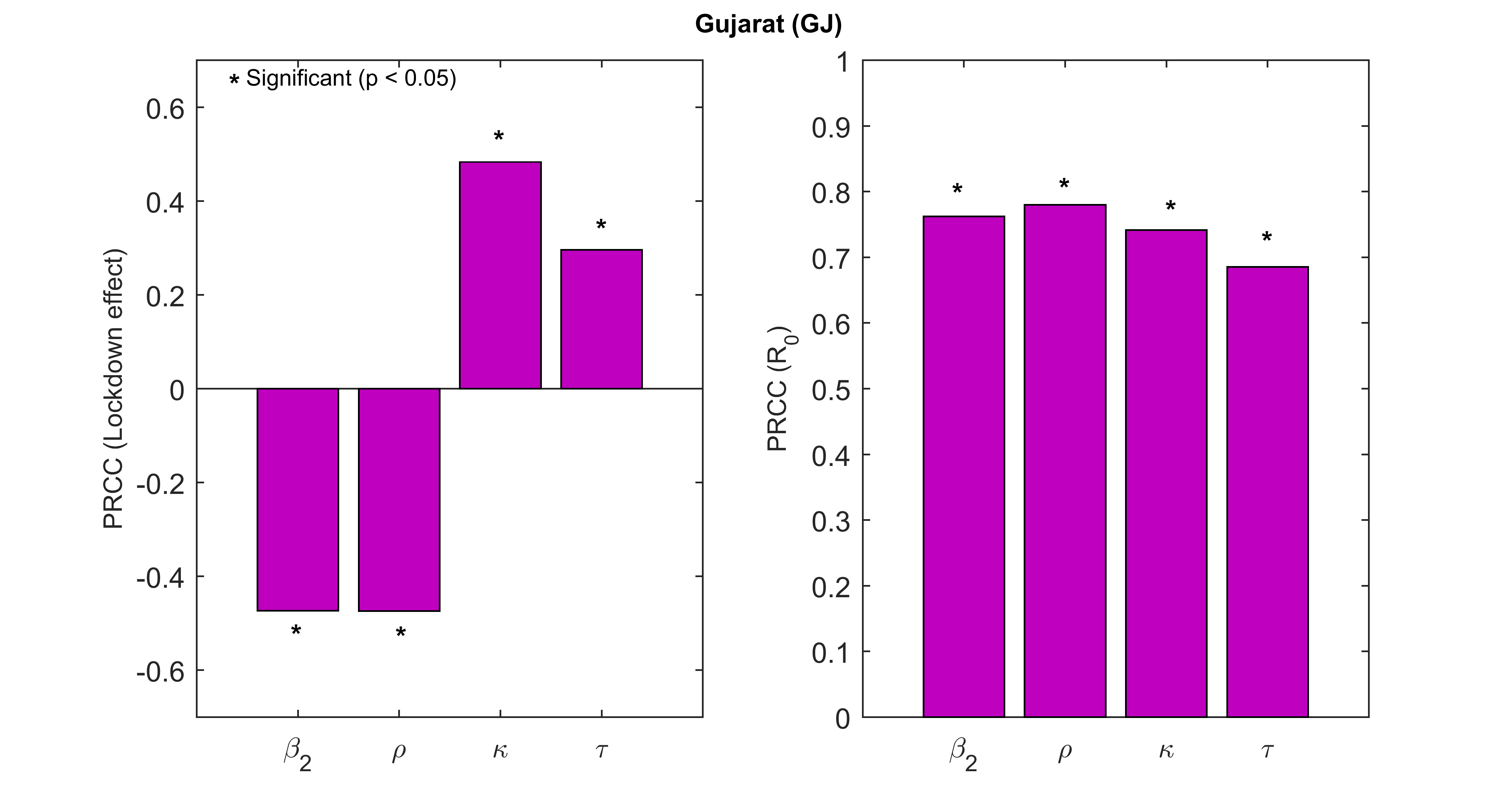}}
	%\end{center}
	\caption{Effect of uncertainty of four epidemiologically measurable and controllable parameters of the mechanistic ODE model~(see Table~1 and Fig~1 in the main text) on the effect of lockdown and the basic reproduction number ($R_{0}$). Lockdown effect is measured in terms of the differences in total number of COVID-19 cases occurred during May 3, 2020 till May 20, 2020 in \textbf{Gujarat} under the lockdown scenarios (\textbf{A1}) and (\textbf{A5}), respectively~(see main text). Effect of Uncertainty of these four parameters on the two mentioned responses are measured using Partial Rank Correlation Coefficients (PRCC). $500$~samples for each parameters were drawn using Latin hypercube sampling techniques (LHS) from their respective ranges provided in Table~1 (main text).}     
	\label{Fig:Sensitivity-GJ}
\end{figure}

\begin{figure}[ht]
	\captionsetup{width=1.2\textwidth}
	%\begin{center}
	{\includegraphics[width=1.15\textwidth]{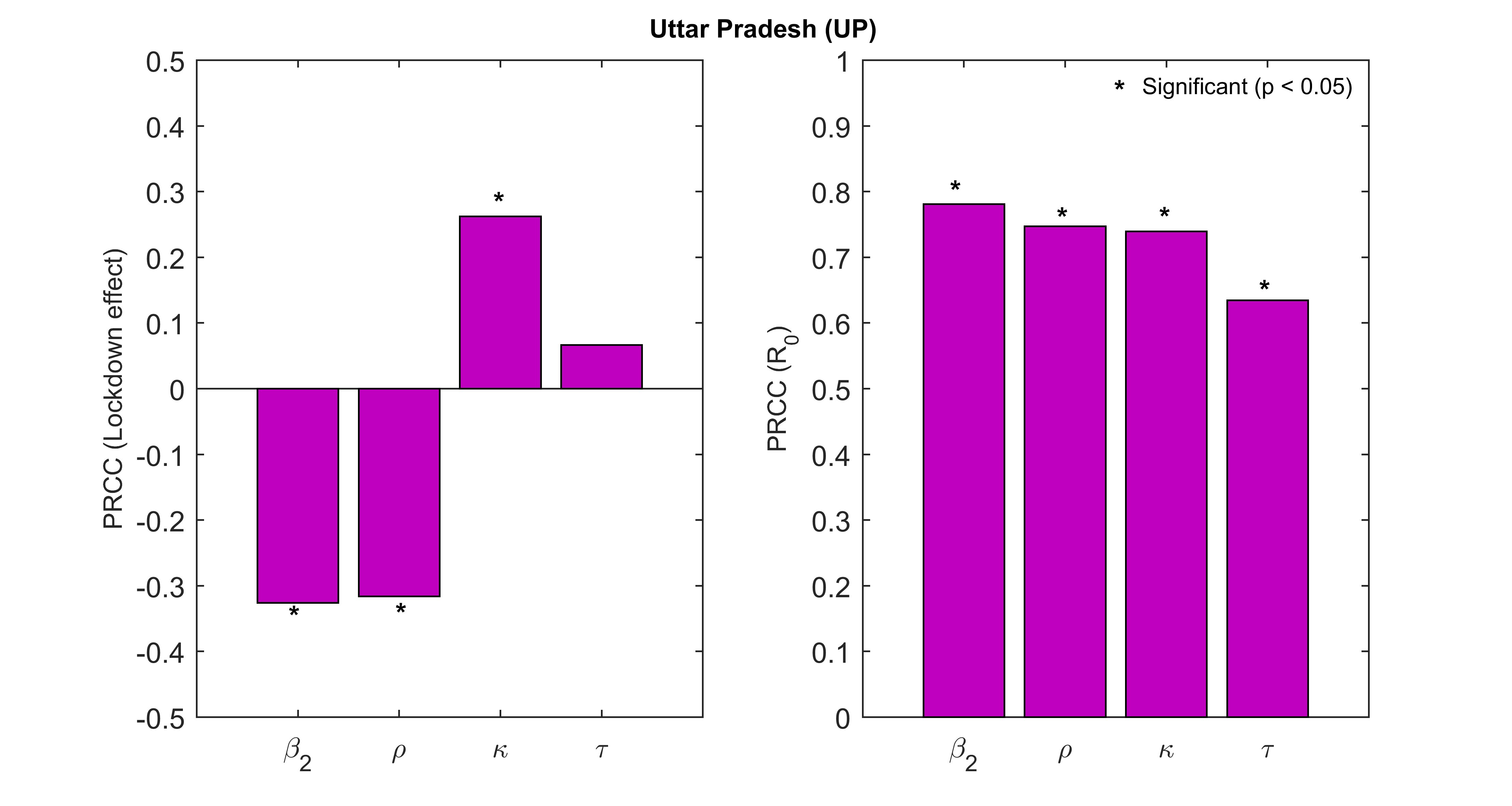}}
	%\end{center}
	\caption{Effect of uncertainty of four epidemiologically measurable and controllable parameters of the mechanistic ODE model~(see Table~1 and Fig~1 in the main text) on the effect of lockdown and the basic reproduction number ($R_{0}$). Lockdown effect is measured in terms of the differences in total number of COVID-19 cases occurred during May 3, 2020 till May 20, 2020 in \textbf{Uttar Pradesh} under the lockdown scenarios (\textbf{A1}) and (\textbf{A5}), respectively~(see main text). Effect of Uncertainty of these four parameters on the two mentioned responses are measured using Partial Rank Correlation Coefficients (PRCC). $500$~samples for each parameters were drawn using Latin hypercube sampling techniques (LHS) from their respective ranges provided in Table~1 (main text).}     
	\label{Fig:Sensitivity-UP}
\end{figure}

\begin{figure}[ht]
	\begin{center}
		\includegraphics[width=0.8\textwidth]{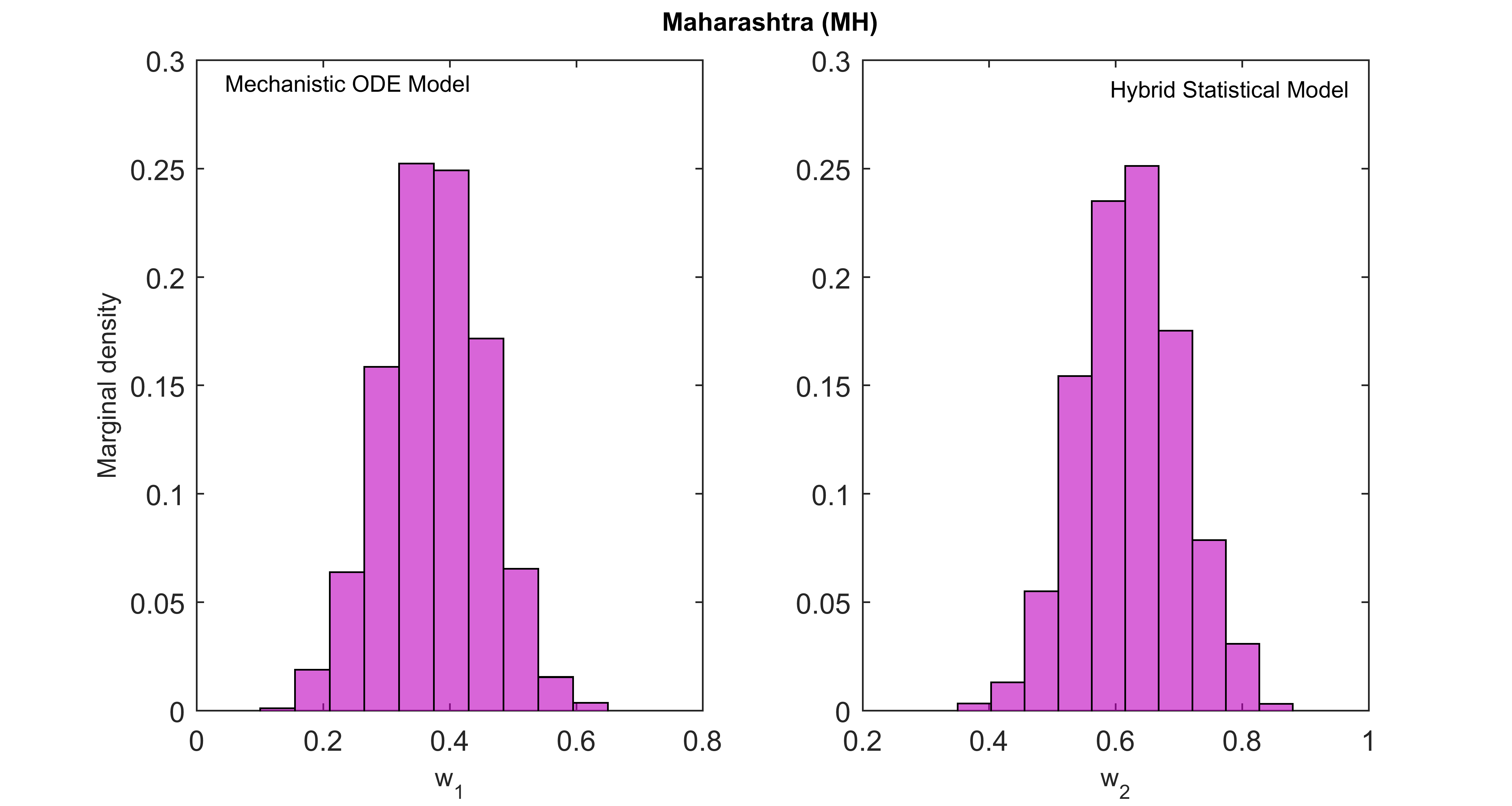}
	\end{center}
	\caption{Posterior distribution of the weights for the mechanistic ODE model combinations~(\ref{EQ:eqn 2.1}) \&~(\ref{EQ:eqn 2.2}) and the Hybrid statistical model (see main text), respectively for Maharashtra.}
	\label{Fig:posterior-weight-MH}
\end{figure}
\begin{figure}
	\begin{center}
		\includegraphics[width=0.8\textwidth]{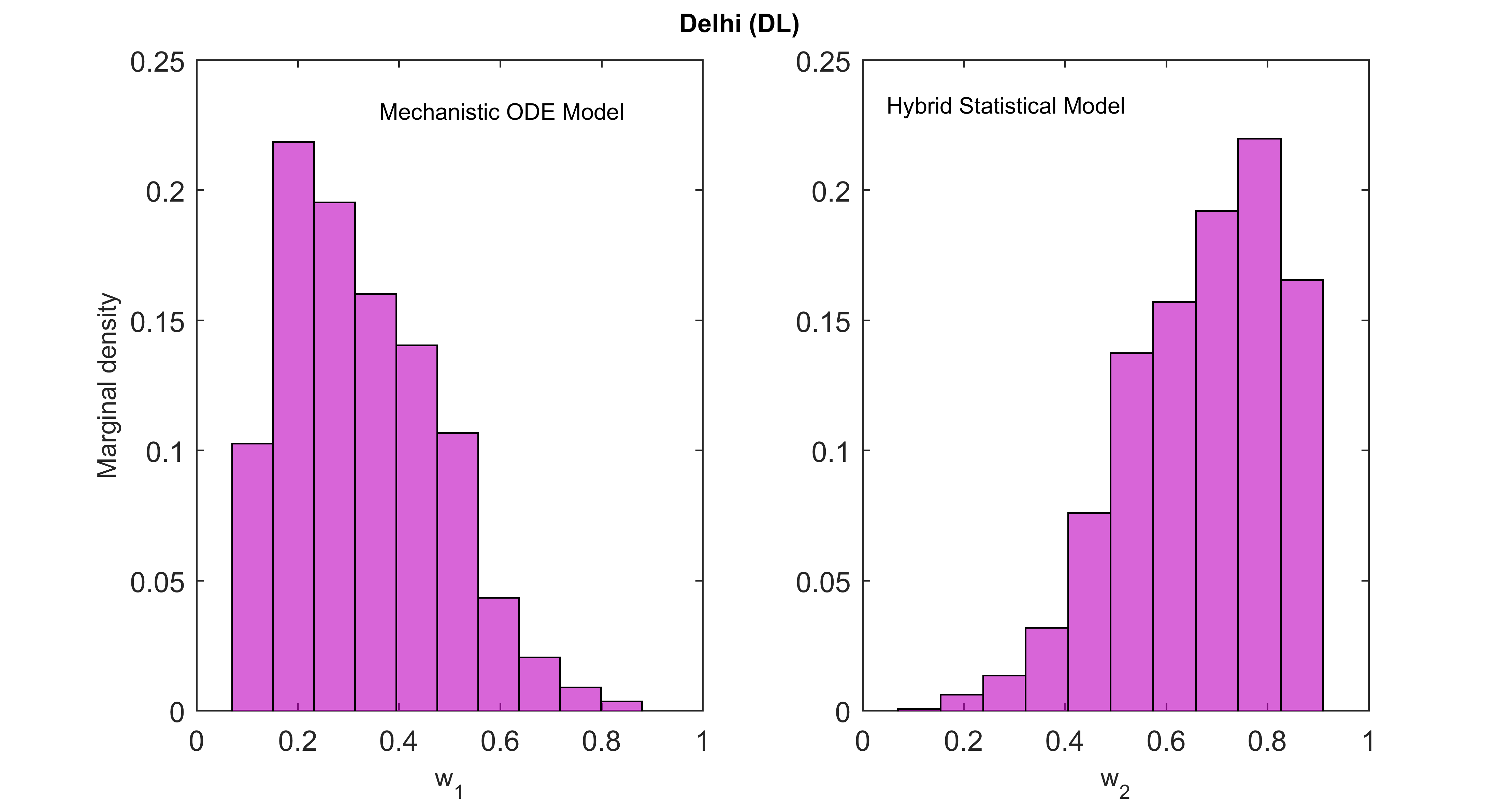}
	\end{center}
	\caption{Posterior distribution of the weights for the mechanistic ODE model combinations~(\ref{EQ:eqn 2.1}) \&~(\ref{EQ:eqn 2.2}) and the Hybrid statistical model (see main text), respectively for Delhi.}
	\label{Fig:posterior-weight-DL}
\end{figure}

\begin{figure}
	\begin{center}
		\includegraphics[width=0.8\textwidth]{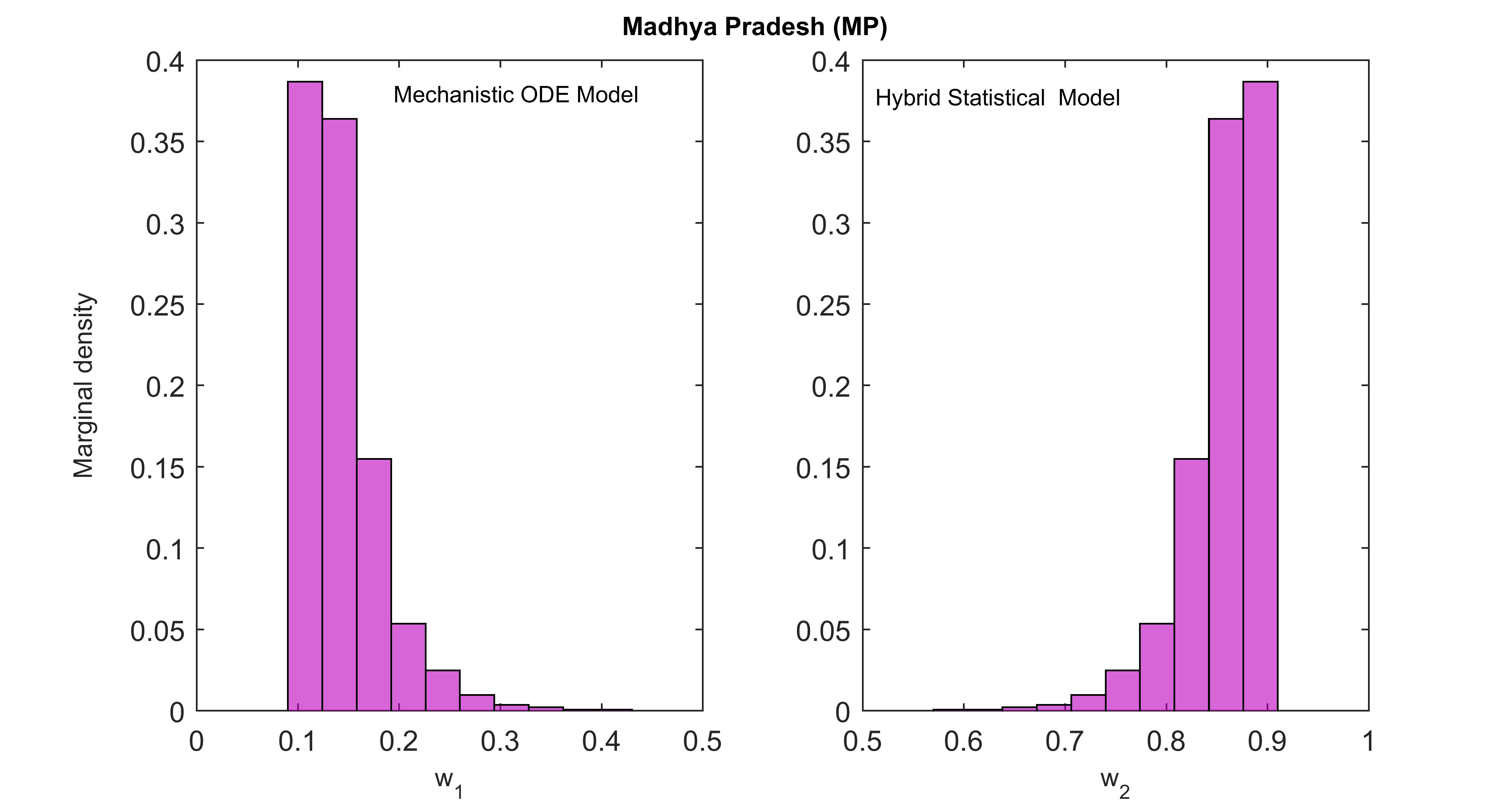}
	\end{center}
	\caption{Posterior distribution of the weights for the mechanistic ODE model combinations~(\ref{EQ:eqn 2.1}) \&~(\ref{EQ:eqn 2.2}) and the Hybrid statistical model (see main text), respectively for Madhya Pradesh.}
	\label{Fig:posterior-weight-MP}
\end{figure}
\begin{figure}
	\begin{center}
		\includegraphics[width=0.8\textwidth]{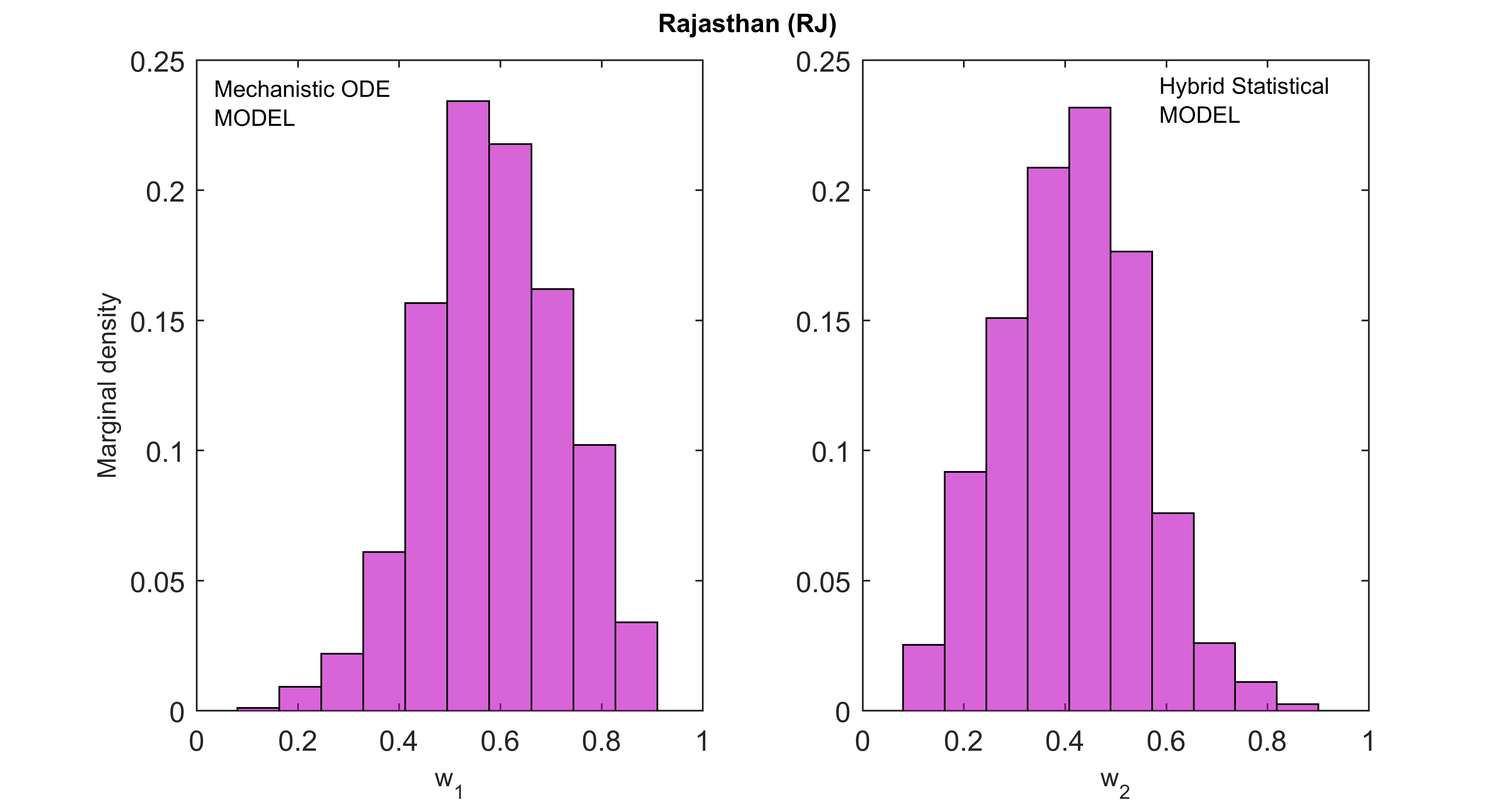}
	\end{center}
	\caption{Posterior distribution of the weights for the mechanistic ODE model combinations~(\ref{EQ:eqn 2.1}) \&~(\ref{EQ:eqn 2.2}) and the Hybrid statistical model (see main text), respectively for Rajasthan.}
	\label{Fig:posterior-weight-RJ}
\end{figure}

\begin{figure}
	\begin{center}
		\includegraphics[width=0.8\textwidth]{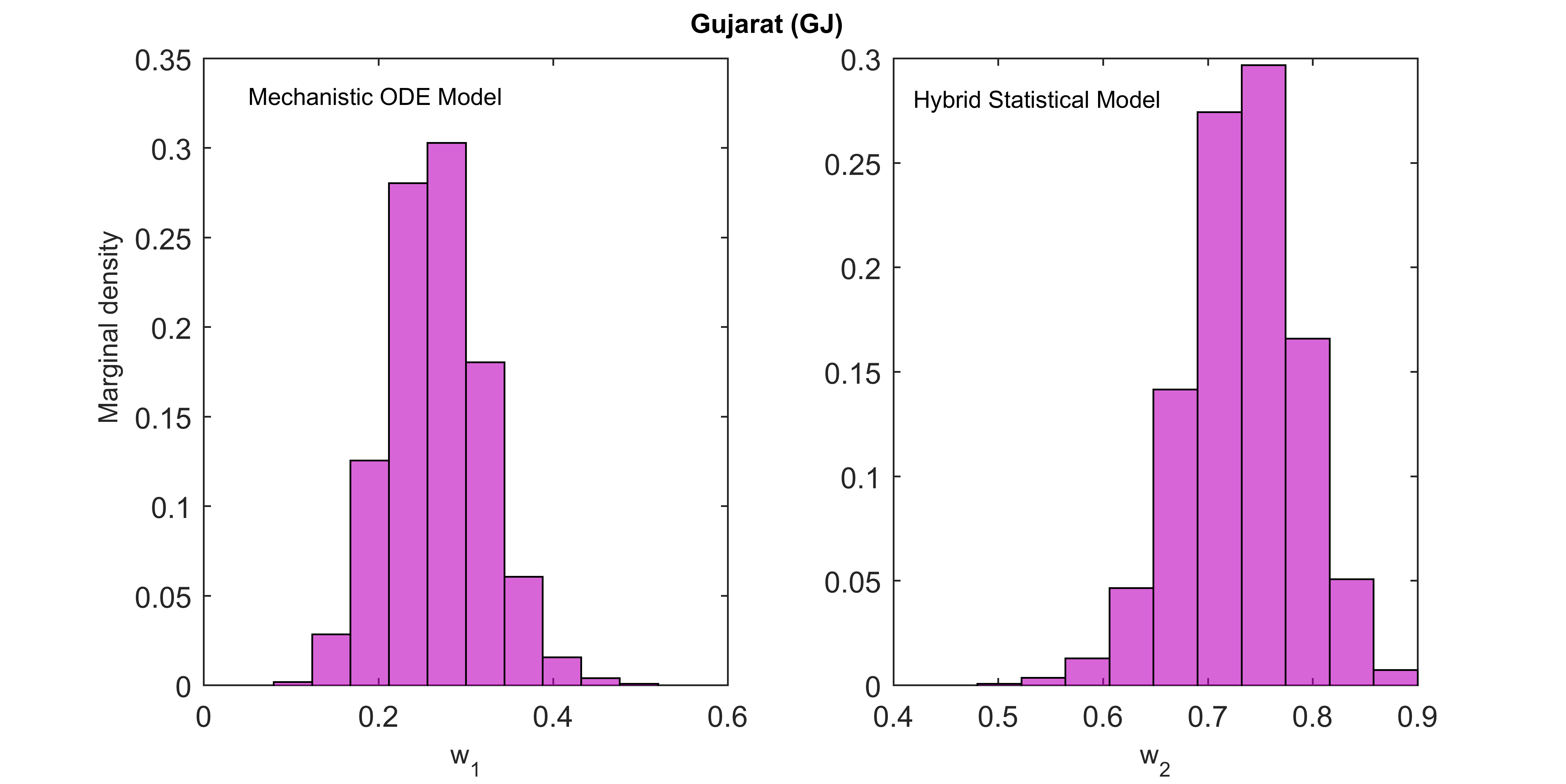}
	\end{center}
	\caption{Posterior distribution of the weights for the mechanistic ODE model combinations~(\ref{EQ:eqn 2.1}) \&~(\ref{EQ:eqn 2.2}) and the Hybrid statistical model (see main text), respectively for Gujarat.}
	\label{Fig:posterior-weight-GJ}
\end{figure}

\begin{figure}
	\begin{center}
		\includegraphics[width=0.8\textwidth]{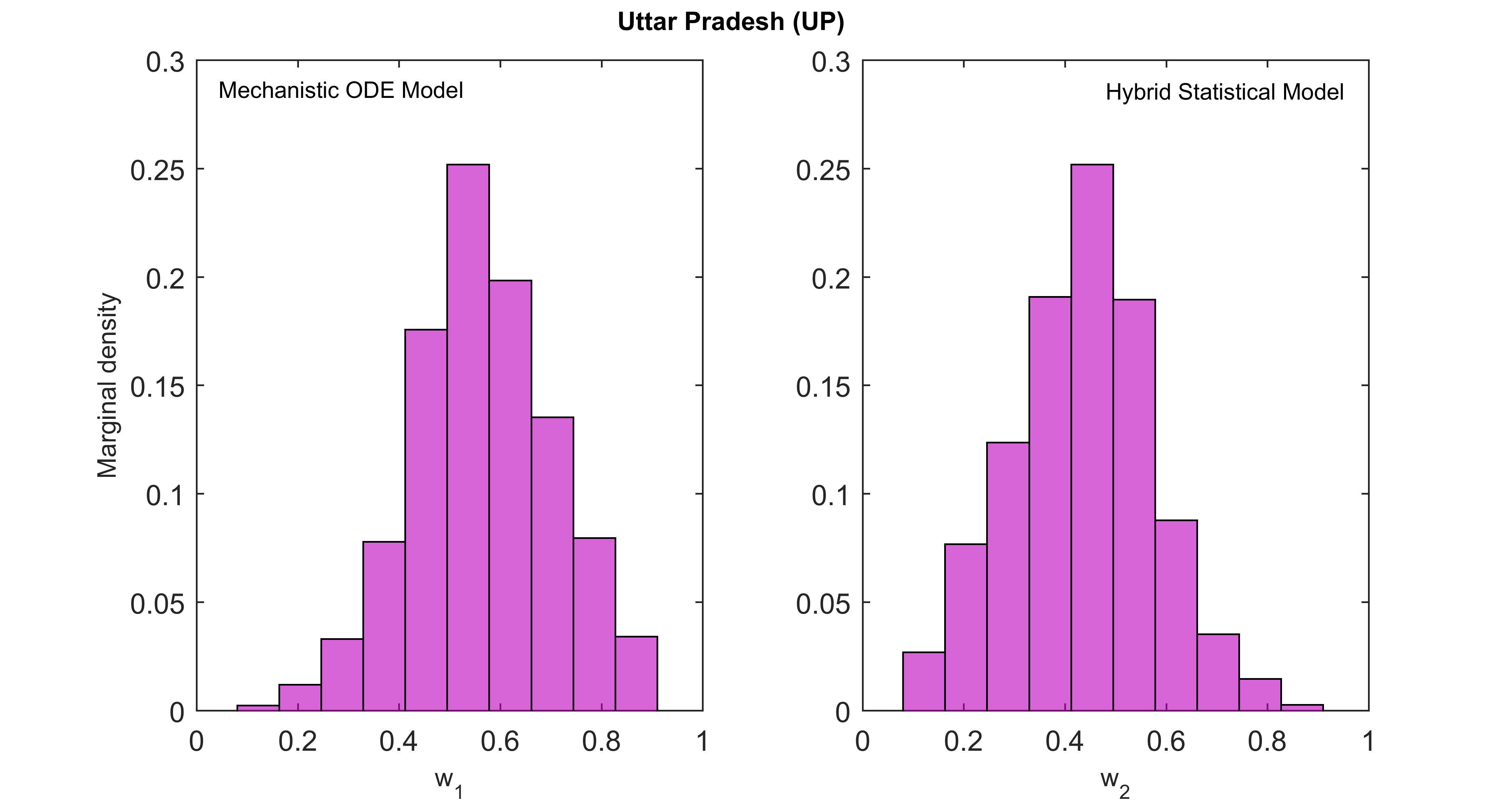}
	\end{center}
	\caption{Posterior distribution of the weights for the mechanistic ODE model combinations~(\ref{EQ:eqn 2.1}) \&~(\ref{EQ:eqn 2.2}) and the Hybrid statistical model (see main text), respectively for Uttar Pradesh.}
	\label{Fig:posterior-weight-UP}
\end{figure}

\begin{figure}
	\begin{center}
		\includegraphics[width=0.8\textwidth]{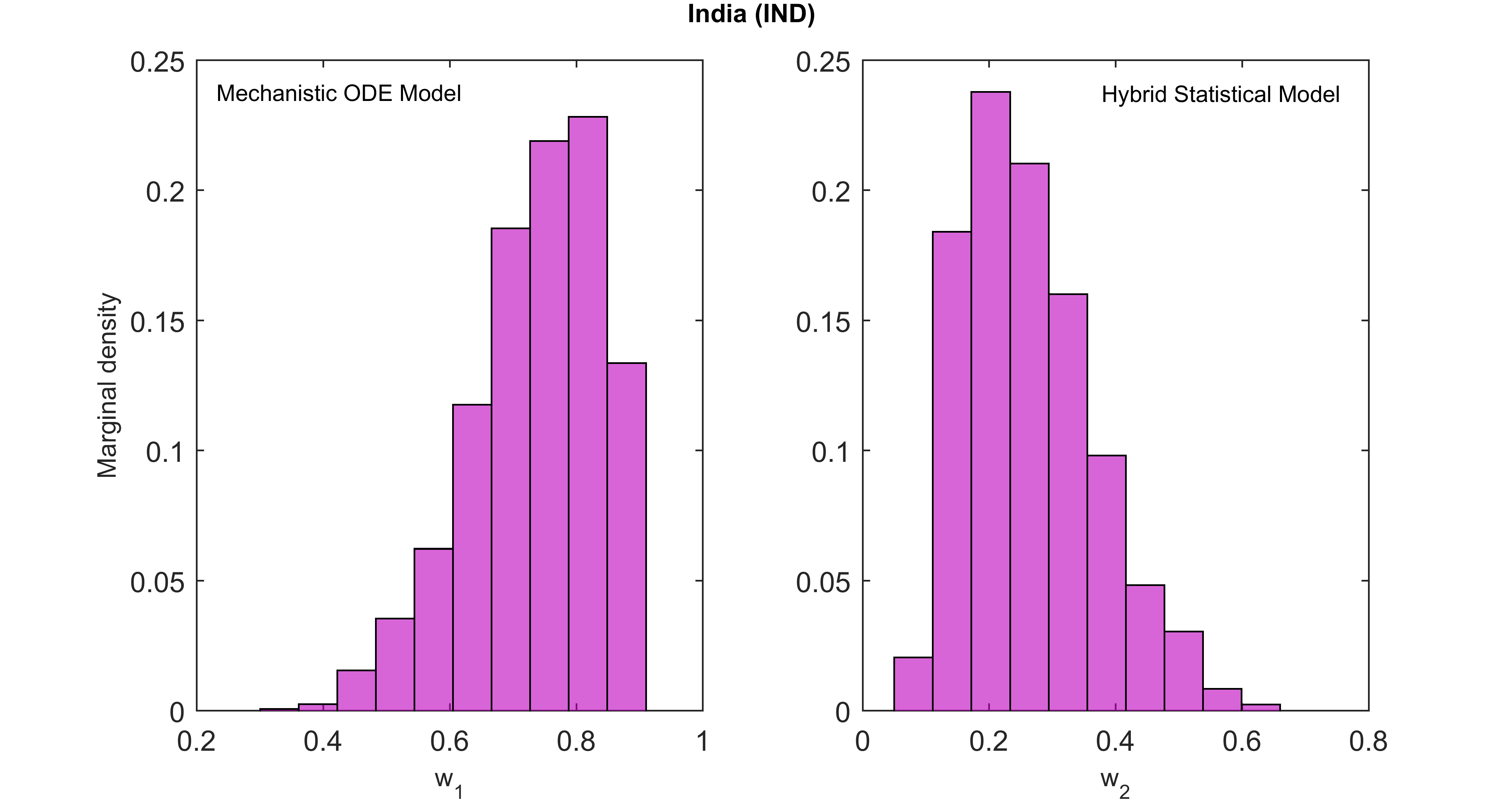}
	\end{center}
	\caption{Posterior distribution of the weights for the mechanistic ODE model combinations~(\ref{EQ:eqn 2.1}) \&~(\ref{EQ:eqn 2.2}) and the Hybrid statistical model (see main text), respectively for India.}
	\label{Fig:posterior-weight-IND}
\end{figure}
\clearpage

\begin{center}
	\section*{\Large{\underline{Tables}}}
\end{center}

\begin{table}[ht]
	\captionsetup{font=normalsize}
	\captionsetup{width=1.1\textwidth}
	\tabcolsep 19pt		
	\centering
	\caption{Estimated uninformative initial conditions for the mechanistic ODE model~(\ref{EQ:eqn 2.1}). Respective subscripts are \textbf{MH}:~Maharashtra, \textbf{DL}:~Delhi, \textbf{MP}:~Madhya Pradesh, \textbf{RJ}:~Rajasthan, \textbf{GJ}:~Gujarat, \textbf{UP}:~Uttar Pradesh, and \textbf{IND}:~India. All data are provided in the format~\textbf{Estimate (95\% CI)}.}\vspace{0.2cm}
	\begin{tabular}{|p{2.4cm}|ccccccccc} \hline\\
		\textbf{Region} &  $\boldsymbol{S(0)}$ & $\boldsymbol{E(0)}$  & $\boldsymbol{I(0)}$  & $\boldsymbol{A(0)}$ \\ \hline\\
		\textbf{MH}& $\substack{120428468  \\  \\ (101502357 - 124478508)}$ & $\substack{0.001  \\  \\ (0 - 0.003)}$ & $\substack{332.43  \\  \\ (6.14 - 633.93)}$ & $\substack{2316.93  \\  \\ (230.8 - 7103.05)}$\\\\
		\hline\\
		\textbf{DL}& $\substack{12173507  \\  \\ (10289760 - 19705021)}$ & $\substack{3743  \\  \\ (179- 9704)}$ & $\substack{0.14  \\  \\ (0 - 0.29)}$ & $\substack{245  \\  \\ (144.5 - 9409)}$\\\\
		\hline\\
		\textbf{MP}& $\substack{84632517  \\  \\ (70665560 - 89661041)}$ & $\substack{1541 \\  \\ (65.96 - 4956)}$ & $\substack{0.2914 \\  \\ (0.01 - 1.10)}$ & $\substack{4095  \\  \\ (93.46 - 7852)}$\\\\
		\hline\\
		\textbf{RJ}& $\substack{74502407  \\  \\ (70099616 - 79723261)}$ & $\substack{195.9 \\  \\ (43.74 - 9770)}$ & $\substack{129  \\  \\ (0.7408 - 132)}$ & $\substack{6209  \\  \\ (40.86 - 9404)}$\\\\
		\hline\\
		\textbf{GJ}& $\substack{62744520  \\  \\ (60628093 - 69846136)}$ & $\substack{1280 \\  \\ (58.78 - 2238)}$ & $\substack{79.29  \\  \\ (3.70 - 237)}$ & $\substack{1.17  \\  \\ (0.39 - 2.60)}$\\\\
		\hline\\
		\textbf{UP}& $\substack{224225631  \\  \\ (200822802 - 229090013)}$ & $\substack{0.49 \\  \\ (0.01 - 1.34)}$ & $\substack{3.07  \\  \\ (0.17 - 7.35)}$ & $\substack{4764  \\  \\ (379 - 8031)}$\\\\
		\hline\\
		\textbf{IND}& $\substack{1219512468  \\  \\ (1219512465 - 1219512469)}$ & $\substack{49998 \\  \\ (49995 - 49999)}$ & $\substack{9997  \\  \\ (9994 - 9999)}$ & $\substack{9995  \\  \\ (9991 - 9999)}$\\\\
		\hline
	\end{tabular}
	\label{Tab:estimated-initial-Table}
\end{table}

\begin{table}[ht]
	\captionsetup{font=normalsize}
	\captionsetup{width=1.1\textwidth}
	\tabcolsep 4.5pt		
	\centering
	\caption{Weight estimates for the mechanistic ODE model combinations~(\ref{EQ:eqn 2.1}) \&~(\ref{EQ:eqn 2.2}) and the Hybrid statistical model (see main text), respectively. Respective subscripts \textbf{MH}, \textbf{DL}, \textbf{MP}, \textbf{RJ}, \textbf{GJ}, \textbf{UP}, and \textbf{IND} are same as Table~\ref{Tab:estimated-initial-Table}. All data are provided in the format~\textbf{Estimate (95\% CI)}.}  
	\begin{tabular}{|p{1.8cm}|ccccccccc} \hline\\
		\textbf{Weights} &  \textbf{MH} & \textbf{DL}  & \textbf{MP}  & \textbf{RJ} &  \textbf{GJ}  & \textbf{UP} & \textbf{IND} \\ \hline\\
		$\boldsymbol{w_{1}}$& \footnotesize{$\substack{0.3019  \\  \\ (0.2160 - 0.5284)}$} & \footnotesize{$\substack{0.3226  \\  \\ (0.1149 - 0.6594)}$} & \footnotesize{$\substack{0.1599  \\  \\ (0.0966 - 0.2384)}$} & \footnotesize{$\substack{0.5490  \\  \\ (0.3091 - 0.8384)}$} & \footnotesize{$\substack{0.1672  \\  \\ (0.1665 - 0.38)}$} & \footnotesize{$\substack{0.6612  \\  \\ (0.2807 - 0.8333)}$} & \footnotesize{$\substack{0.8176  \\  \\ (0.5017 - 0.8874)}$}\\\\
		\hline\\
		$\boldsymbol{w_{2}}$ & \footnotesize{$\substack{0.6981  \\  \\ (0.4716-0.7840)}$} & \footnotesize{$\substack{0.6774  \\  \\ (0.3405 - 0.8851)}$} & \footnotesize{$\substack{0.8401 \\  \\ (0.7616-0.9034)}$} & \footnotesize{$\substack{0.4510  \\  \\ (0.1616 - 0.6909)}$} & \footnotesize{$\substack{0.8328  \\  \\ (0.62 - 0.8335)}$} & \footnotesize{$\substack{0.3388  \\  \\ (0.1667 - 0.7193)}$} & \footnotesize{$\substack{0.1824  \\  \\ (0.1126 - 0.4983)}$}\\
		\hline		
	\end{tabular}
	\label{Tab:estimated-weights}
\end{table}
\begin{table}[ht]
	\captionsetup{font=normalsize}
	\captionsetup{width=1.1\textwidth}
	\tabcolsep 13pt		
	\centering
	\caption{Goodness of fit (RMSE and MAE) of the Hybrid statistical model (see main text) for the test data from \textbf{MH}, \textbf{DL}, \textbf{MP}, \textbf{RJ}, \textbf{GJ}, \textbf{UP}, and \textbf{IND}, respectively. Respective subscripts \textbf{MH}, \textbf{DL}, \textbf{MP}, \textbf{RJ}, \textbf{GJ}, \textbf{UP}, and \textbf{IND} are same as Table~\ref{Tab:estimated-initial-Table}.}  
	\begin{tabular}{|p{3.8cm}|ccccccc} \hline\\
		\textbf{Goodness of fit} &  \textbf{MH} & \textbf{DL}  & \textbf{MP}  & \textbf{RJ} &  \textbf{GJ}  & \textbf{UP} & \textbf{IND} \\ \hline\\
		$\boldsymbol{RMSE}$& \footnotesize{$\substack{35.15368}$} & \footnotesize{$\substack{57.855}$} & \footnotesize{$\substack{28.59711}$} & \footnotesize{$\substack{18.4018}$} & \footnotesize{$\substack{21.77785}$} & \footnotesize{$\substack{17.60281}$} & \footnotesize{$\substack{114.3764}$}\\\\
		\hline\\
		$\boldsymbol{MAE}$ & \footnotesize{$\substack{22.2375}$} & \footnotesize{$\substack{34.4592}$} & \footnotesize{$\substack{20.01179}$} & \footnotesize{$\substack{13.22069}$} & \footnotesize{$\substack{12.91013}$} & \footnotesize{$\substack{10.16784}$} & \footnotesize{$\substack{65.61564}$}\\
		\hline		
	\end{tabular}
	\label{Tab:rmse-mae}
\end{table}

\end{document}